\newtheorem{remark}{Remark}
\newtheorem{theorem}{Theorem}
\newtheorem{lemma}{Lemma}
\newtheorem{corollary}{Corollary}
\newtheorem{assumption}{Assumption}
\begin{document}
\title{Hardware Impaired Ambient Backscatter NOMA Systems: Reliability and Security}

\author{Xingwang~Li,~\IEEEmembership{Senior Member,~IEEE,}
        Mengle~Zhao,~\IEEEmembership{Student Member,~IEEE,}
        Ming~Zeng,~\IEEEmembership{Member,~IEEE,}
        Shahid~Mumtaz~\IEEEmembership{Senior Member,~IEEE,}
        Varun~G~Menon~\IEEEmembership{Senior Member,~IEEE,}
        Zhiguo~Ding,~\IEEEmembership{Fellow,~IEEE,}
        Octavia~A.~Dobre~\IEEEmembership{Fellow,~IEEE}

\thanks{X. Li and M. Zhao are with the School of Physics and Electronic Information Engineering, Henan Polytechnic University, Jiaozuo, China (email:lixingwangbupt@gmail.com, zhaomenglehpu@163.com) (First corresponding author: Xingwang  Li).}
\thanks{M. Zeng is with Memorial University, St.John's, NL A1B 3X9, Canada (email: mzeng@mun.ca).}
\thanks{S. Mumtaz is with Institute of Telecommunications, Aveiro, Portugal, (email: smumtaz@av.it.pt).}
\thanks{V. G. Menon is with Department of Computer Science and Engineering, SCMS School of Engineering and Technology, India. (email: varunmenon@ieee.org).}
\thanks{Z. Ding is with the School of Electrical and Electronic Engineering, The University of Manchester, Manchester, UK (email: zhiguo.ding@manchester.ac.uk).}
\thanks{O. A. Dobre is with Memorial University, St. John's, NL A1B 3X9, Canada (e-mail: odobre@mun.ca). }
}

\maketitle
\nocite{Li2018}
\begin{abstract}  Non-orthogonal multiple access (NOMA) and ambient backscatter communication have been envisioned as two promising technologies for the Internet-of-things due to their high spectral efficiency and energy efficiency. Motivated by this fact, we consider an ambient backscatter NOMA system in the presence of a malicious eavesdropper. Under some realistic assumptions of residual hardware impairments (RHIs), channel estimation errors (CEEs) and imperfect successive interference cancellation (ipSIC), we investigate the physical layer security (PLS) of the ambient backscatter NOMA systems focusing on  reliability and security. In order to further improve the security of the considered system, an artificial noise scheme is proposed where the radio frequency (RF) source acts as a jammer that transmits interference signal to the legitimate receivers and eavesdropper. On this basis, the analytical expressions for the outage probability (OP) and the intercept probability (IP) are derived. To gain more insights, the asymptotic analysis and diversity orders for the OP in the high signal-to-noise ratio (SNR) regime are carried out, and the asymptotic behaviors of the IP in the high main-to-eavesdropper ratio (MER) region are explored as well. Numerical results show that: 1) RHIs, CEEs and ipSIC have negative effects on the OP but positive effects on the IP; 2) Compared with CEEs, RHIs have a more serious impact on the reliability and security of the considered system; 3) There exists a trade-off between reliability and security, and this trade-off can be optimized by reducing the power coefficient of the artificial noise or increasing the interfering factor of readers; 4) There are error floors for the OP due to the CEEs and the reflection coefficient; 5) As MER grows large, the security for $R_n$ and $R_f$ is improved, while the security for $T$ is reduced.
\end{abstract}

\begin{IEEEkeywords}
Internet-of-things, ambient backscatter, NOMA, residual hardware impairments, physical layer security, channel estimation errors, imperfect successive interference cancellation, artificial noise.
\end{IEEEkeywords}

\section{Introduction}
A large number of intelligent devices will be supported for the wireless networks with Internet-of-things (IoT) and massive machine-type communication  \cite{1,111}. To this end, non-orthogonal multiple access (NOMA) has been identified as a promising solution to serve massive connections due to high spectral efficiency and low latency \cite{2}.\footnote{Generally, NOMA can be divided into code-domain NOMA and power-NOMA. In this paper, we use NOMA to refer to the power-domain NOMA.} The distinguishing feature of NOMA is that a plurality of users are allowed to occupy the same time/frequency/code resources by power multiplexing through superposition coding \cite{3}. At the receiver, the signals can be extracted with the aid of successive interference cancellation (SIC) \cite{4}. From the perspective of coverage, NOMA can enhance the performance of the cell edge users by allocating more power to them \cite{7527668}.

On a parallel avenue, backscatter communication has emerged as a promising paradigm for green sustainable IoT applications due to its ultralow-power and low cost \cite{5}. A well-known backscatter communication application for the IoT is radio frequency identification (RFID) that consists of one reader and one tag. More exactly, the tag modulates and reflects the incident signal from the energy source through a mismatched antenna impedance to passively transmit information, and the reader performs demodulation after receiving the reflected signal \cite{6}. However, the traditional backscatter communication technology is limited by the power consumption resulting from the active transmission \cite{7}. To tackle this limitation, the work in \cite{8} proposed ambient backscatter prototypes. This technology utilizes environmental wireless signals (e.g., digital TV broadcasting or cellular signals) to collect energy and transmit information through battery-free tags.

Ambient backscatter technology has drawn great attention from both academia and industry \cite{9,10,11,12,13}. A framework for evaluating the  ultimate achievable rates of point-to-point networks with ambient backscatter devices was proposed in \cite{9}, where the impact of the backscatter transmission on the performance of the legacy systems was considered. In \cite{10}, the authors analyzed the outage performance of the ambient backscatter communication systems with a pair of passive tag-reader by deriving the exact and asymptotic expressions for the outage probability (OP). Guo \emph{et al}. in \cite{11} exploited the NOMA technology to support massive tag connections. According to the unique characteristics of the cooperative ambient backscatter system, the authors of \cite{12} proposed three symbiotic transmission schemes, where the relationships between the primary and backscatter transmissions were commensal, parasitic, and competitive. The authors in \cite{13} investigated the effects of co-channel interference and the energy harvesting (EH) on the achievable OP of the ambient backscatter communication systems with multiple backscatter links.

Another well-known fact is that the transmission of wireless signals is vulnerable to fronted threats due to the broadcast nature of wireless medium. The traditional key encryption technologies has high computation complexity, and thus, are not suitable for small-volume backscatter devices with limited storage and computing power \cite{14}. As a result, they may not be applied for solving the security communication problem of the ambient backscatter NOMA systems \cite{144}.

As an alternative, physical layer security (PLS) was proposed as a promising mechanism to ensure the security of wireless communication systems from an information theoretic perspective \cite{6772207,9013326}. By exploiting the inherent random characteristics of wireless channels, PLS can achieve secure communication for wireless networks without being eavesdropped by illegal eavesdroppers, which has sparked a great deal of research interests, e.g., see \cite{15,16,17,18,19} and the references therein. In \cite{15}, the secrecy outage performance of a multiple-relay NOMA network was investigated, where three relay selection schemes were proposed. With the emphasis on the cognitive radio networks (CRNs), the authors of \cite{16} evaluated the reliability-security tradeoff by deriving the connection outage probability and the secrecy outage probability for the cooperative NOMA aided CRN systems. Additionally, the secrecy rate was studied under the traditional backscatter communications systems in \cite{17}, where the reader and eavesdropper were equipped with multiple antennas. To enhance the security of the ambient backscatter communication systems, an optimal tag selection scheme for the multi-tag ambient backscatter systems was designed in \cite{18}. By the virtue of artificial noise, an enhanced PLS scheme for multi-tag ambient backscatter system was designed, and the bit error rate and secrecy rate were investigated in \cite{19}.  Moreover, the authors of \cite{8649584} proposed to combined multiple-input multiple-output technology with artificial noise technology to enhance the secrecy performance of NOMA systems.

Unfortunately, the common feature of the aforementioned contributions is that perfect radio frequency (RF) components are assumed, which may not be realistic in practical communication systems. In practice, all RF front-ends are vulnerable to several types of hardware impairments due to the configuration of low cost, low-power dissipation, and small size components, such as amplifier non-linearities, in-phase/quadrature imbalance, phase noise, and quantization error \cite{20,22,21}. These impairments can be generally eliminated by using some compensation and calibration algorithms. However, owing to some factors such as estimation errors, inaccurate calibration, and time-varying hardware characteristics, there are still some residual hardware impairments (RHIs), which can be modeled as an additive distortion noise to the transmitted/received signals \cite{20}. To this end, a great deal of works have studied the impact of RHIs on system performance \cite{21,23}. In \cite{21}, the authors investigated the effects of RHIs on the achievable sum rate of the unmanned aerial vehicle-aided NOMA relaying networks. Considering two types of relay selection schemes, the impacts of RHIs on the multiple-relay amplify-and-forward network was studied by deriving the tight closed-form expressions for the OP \cite{23}.

Moreover, another limitation of the above research works is that imperfect channel state information (CSI) is assumed available at receivers, which is not practical. In fact, it is a great challenge to obtain perfect channel knowledge due to channel estimation errors (CEEs) and feedback delay \cite{24}. The related research works about imperfect CSI have been reported in \cite{25,26,27}. The outage performance of the downlink cooperative NOMA systems based on wireless backhaul unreliability and imperfect CSI was studied by deriving the exact and asymptotic OP expressions at the receivers \cite{25}. A proportional fair scheduling algorithm was proposed to achieved high throughput and fairness, which was extended to the multi-user NOMA scenarios with imperfect CSI in \cite{26}. The authors of \cite{27} considered a more practical scenarios, where the outage performance of the amplify-and-forward relay systems was analyzed in the presence of RHIs and CEEs. Therefore, it is of high practical relevance to look into the realistic scenario with imperfect CSI and RHIs.

\subsection{Motivation and Contribution}
The previous research works have laid a solid foundation for the analysis of NOMA, ambient backscatter and PLS. However the joint effects of RHIs, CEEs and imperfect SIC (ipSIC) on the secure performance of the ambient backscatter NOMA systems have not yet been well investigated. To fill this gap, this paper makes an in-depth study of the joint effects of the three non-ideal factors on the reliability and the security of the ambient backscatter NOMA systems. In order to improve the security, we consider an artificial noise scheme, where the RF source sends the signal and artificial noise simultaneously. This scheme is feasible since it is carried out without changing the original system framework \cite{8879726,7546865}. Specifically, the analytical expressions for the OP and the intercept probability (IP) are derived for the far reader, the near reader and the tag, respectively. To obtain more insights, the asymptotic behaviors for the OP in the high signal-to-noise ratio (SNR) regime and the asymptotic behaviors for the IP in the high main-to-eavesdropper ratio (MER) region are explored. The essential contributions of this paper are summarized as follows:
\begin{itemize}

  \item  We consider a novel secure framework for the ambient backscatter NOMA systems in the presence of RHIs, CEEs, and ipSIC. To improve secure performance, an artificial noise scheme is designed.
  \item  We derive the analytical expressions for the OP and the IP the far reader, the near reader and the tag to evaluate the reliability and the security. The results show that a smaller power coefficient of artificial noise or a larger interfering factor of readers can enhance the impact of artificial noise on balancing the trade-off reliability-security.
  \item   In order to obtain deeper insights, we carry out the asymptotic analysis for the OP in the high SNR region as well as the diversity orders. Moreover, the asymptotic behaviors of the IP in the high MER regime are explored by introducing the MER. The obtained results indicate that there are error floors for the OP due to the CEEs and the reflection coefficient.
\end{itemize}

\subsection{Organization and Notations}
The remainder of this paper is organised as follows. In Section II, we introduce the ambient backscatter NOMA model. In Section III, the reliability is investigated by deriving the analytical and asymptotic expressions for the OP, while the expressions of IP are derived to analyze the security. In Section IV, some numerical results are provided to validate the correctness of the theoretical analysis. Section V concludes the paper and summarizes key findings.

We use $E\left\{  \cdot  \right\}$ to denote the expectation operation. A complex Gaussian random variable with mean $\mu$ and variance $\sigma^2$ reads as $\mathcal{CN}\{\mu, \sigma^2\}$. $\Pr \left\{  {\cdot}  \right\}$ denotes the probability and ${\rm{K}}v\left(  \cdot  \right)$ represents the v-th order modified
Bessel function of the second kind, while $n!$ denotes the factorial operation. Finally, $f_X(\cdot)$ and $F_X(\cdot)$ are the probability density function (PDF) and the cumulative distribution function (CDF) of a random variable, respectively.

\begin{figure}[!t]
\setlength{\abovecaptionskip}{0pt}
\centering
\includegraphics [width=2.5in]{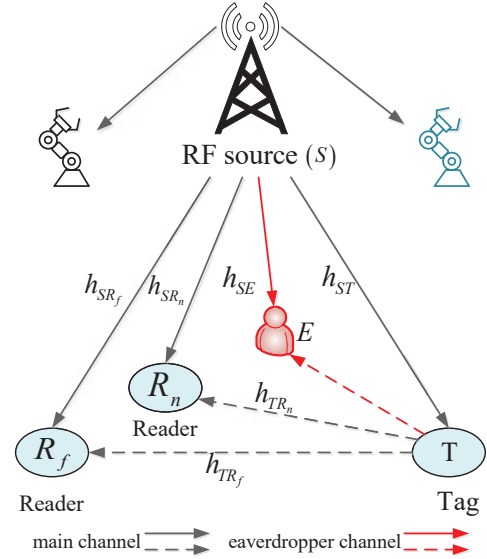}
\caption{Ambient backscatter NOMA system model.}
\label{fig2}
\end{figure}

\section{System Model}\label{sec2}
As illustrated in Fig. 1, we consider a downlink ambient backscatter NOMA system, which consists of one ambient RF source ($S$), one tag ($T$), two readers ($R_f$, $R_n$) and one eavesdropper ($E$). In this study, $S$ transmits the signal to readers and tag in the same resource block. Meanwhile, $T$ transmits its own information to the readers by reflecting the signals from $S$ signal, whereas $E$ can intercept the signal intended for readers. We consider the following assumptions: i) All the nodes are equipped with a single antenna; ii) RHIs exist at $S$, readers and $E$ but not at the tag; iii) All links $h$ are subject to Rayleigh fading.

Under practical considerations, the perfect CSI may be unavailable due to some CEEs. The common way to obtain CSI is channel estimation. For this purpose, by adopting linear minimum mean square error (MMSE), the channel can be modeled as ${h_{AB}} = {\hat h_{AB}} + {e_{AB}}$ \cite{28}, where ${{\hat h}_{AB}}$ is the estimated channel of ${h_{AB}}$, and ${e_{AB}}\sim {\rm{{\cal C}{\cal N}}}\left( {0,\sigma _{{e_{AB}}}^2} \right)$ denotes the corresponding channel estimation errors, where the variance of CEE $\sigma _{{e_{AB}}}^2$ indicates the quality of CSI.

To improve the security communication of ambient backscatter NOMA systems, we consider injecting artificial noise $z\left( t \right)$ with $E\left( {{{\left| {z\left( t \right)} \right|}^2}} \right) = 1$ at $S$. Due to the CEEs, the artificial noise will cause interference to the readers and eavesdropper. Then, the superposition message at $S$ can be written as
\begin{equation}
\label{1}
{x_s} = {\sqrt {{a_1}{P_s}} {x_1}} + {\sqrt {{a_2}{P_s}} {x_2}}+\sqrt {{P_J}} z\left( t \right),
\end{equation}
where ${P_S}$ is the transmit power for the desired signals at $S$; ${a_1}$ and ${a_2}$ are the power allocation coefficients for the near reader and the far reader with ${a_1} + {a_2} \!= \!1$ and ${a_1} \!<\! {a_2}$, respectively; ${x_1}$ and ${x_2}$ are the corresponding transmitted signals of $R_n$ and $R_f$ with $E\left( {{{\left| {{x_1}} \right|}^2}} \right)\! =\! E\left( {{{\left| {{x_2}} \right|}^2}} \right)\! =\! 1$; ${P_J}$ is the transmitted power of the artificial noise with ${P_J}=\varphi _J {P_S}$, with $\varphi _J  \in \left( {0,1} \right]$ as the power coefficient of artificial noise.

Next, $T$ backscatters the $S$ signal to $R_f$, $R_n$ and $E$ with its own signal $c\left( t \right)$, with $E\left( {{{\left| {c\left( t \right)} \right|}^2}} \right) = 1$. Therefore, $R_f$, $R_n$ and $E$ receive the signals from $S$ and the backscattered from $T$. Considering the RHIs and CEEs, the received signals at $i$ ($i \in \left( {{R_f},{R_n},E} \right)$) can be expressed as
\begin{equation}
\label{2}
{y_i} = \beta {h_{Ti}}{h_{ST}}\left( {{x_s}c\left( t \right){\rm{ + }}{\eta _{Si}}} \right){\rm{ + }}{h_{Si}}\left( {{x_s}{\rm{ + }}{\eta _{Si}}} \right){\rm{ + }}{n_i},
\end{equation}
where $\beta $ is a complex reflection coefficient used to normalize $c\left( t \right)$; ${n_i} \sim {\rm{{\cal C}{\cal N}}}\left( {0,{N_0}} \right)$ is the complex additive white Gaussian noise (AWGN); ${\eta _{Si}}\sim {\rm{{\cal C}{\cal N}}}\left( {0,\kappa _{Si}^2{P_S}} \right)$; $\kappa _{Si}$ denotes the level of hardware impairment at transceivers, which can be measured in practice based on the error vector magnitude (EVM) \cite{29}; ${h_{Si}}$, ${h_{Ti}}$ and ${h_{ST}}$ are the channel coefficients $S \to i$, $T \to i$ and $S \to T$, respectively.

According to the NOMA protocol, $R_f$ can decode the signals ${x_2}$, and $R_n$ and $E$ can decode the signals ${x_2}$, ${x_1}$ and $c\left( t \right)$ in turn with the aid of SIC. In addition, the readers can only eliminate part of the interference due to the presence of CEEs. Then, the received signal-to-interference-plus-noise ratio (SINR) of $i$ $\left( {i \in \{ {R_n},{R_f},E\} } \right)$ can be given as\footnote{It should be pointed out that $R_f$ only needs to decode its own signal $x_2$, that is, the SINR of $R_f$ is $\gamma _{{R_f}}^{{x_2}}$.}
\begin{equation}
\label{3}
\gamma _i^{{x_2}} \!\!=\!\! \frac{{{{\left| {{{\hat h}_{Si}}}\right|}^2}{a_2}\gamma }}{{\gamma \left[ \!{{{\left| {{{\hat h}_{ST}}} \!\right|}^2}\left(\! {{B_i}{{\left| \!{{{\hat h}_{Ti}}} \!\right|}^2} \!\!+\! {M_i}} \right) \!\!+\! {C_i}{{\left| \!{{{\hat h}_{Ti}}} \!\right|}^2} \!\!+\! {Q_i}{{\left| \!{{{\hat h}_{Si}}} \!\right|}^2} \!\!+\! {\psi _i}} \right] \!\!+\! 1}},
\end{equation}
\begin{equation}
\label{4}
\gamma _i^{{x_1}} \!=\! \frac{{{{\left| {{{\hat h}_{Si}}} \right|}^2}{a_1}\gamma }}{{\gamma \!\left[\! {{{\left| \!{{{\hat h}_{ST}}} \!\right|}^2}\left( \!{{B_i}{{\left| {{{\hat h}_{Ti}}} \!\right|}^2} \!\!+\! {M_i}} \!\right) \!\!+\! {C_i}{{\left| \!{{{\hat h}_{Ti}}} \!\right|}^2} \!\!+\! {O_i}{{\left| {{{\hat h}_{Si}}} \right|}^2} \!\!+\! {\psi _i}} \!\right] \!\!+\! 1}},
\end{equation}
\begin{equation}
\label{5}
\gamma _i^{c\left( t \right)}\! =\! \frac{{{\beta ^2}{{\left| {{{\hat h}_{Ti}}} \right|}^2}{{\left| {{{\hat h}_{ST}}} \right|}^2}\gamma }}{{\gamma\! \left[\! {{{\left|\! {{{\hat h}_{ST}}} \!\right|}^2}\left( \!{{m_i}{{\left| \!{{{\hat h}_{Ti}}}\! \right|}^2} \!\!+\! {M_i}} \!\right) \!\!+\! {C_i}{{\left|\! {{{\hat h}_{Ti}}} \!\right|}^2} \!\!+\! {\xi _i}{{\left| \!{{{\hat h}_{Si}}} \!\right|}^2} \!+\! {\psi_i}} \!\right] \!\!+\! 1}},
\end{equation}
where \!$\gamma  = {P_S}/{N_0}$ \!represents the transmit SNR at $S$; $\varepsilon $ is the parameter of ipSIC;
${B_{{R_f}}} \!\!\!= \!\!{\beta ^2}\!\left(\! {1 \!\!+\! \varpi \!{\varphi _J} \!\!+\! \kappa _{S{R_f}}^2}\!\! \right)$, ${C_{{R_f}}} \!\!\!=\!\! {B_{{R_f}}}\!\sigma _{{e_{ST}}}^2$, ${M_{{R_f}}} = {B_{{R_f}}}\sigma _{{e_{T{R_f}}}}^2$, ${Q_{{R_f}}} = \gamma \left( {{a_1} + \varpi {\varphi _J} + \kappa _{S{R_f}}^2} \right)$, ${\psi _{{R_f}}} = {B_{{R_f}}}\sigma _{{e_{T{R_f}}}}^2\sigma _{{e_{ST}}}^2 + \sigma _{{e_{S{R_f}}}}^2\left( {1 + \varpi {\varphi _J} + \kappa _{S{R_f}}^2} \right)$;
$\varpi$ is the interference factor, reflecting the degree of interference of the artificial noise to the readers, with $0 \;\le\; \varpi \;\le\; 1$;
${B_{{R_n}}} \!=\! {\beta ^2}\left( {1 \!+\! \varpi {\varphi _J} \!+\! \kappa _{S{R_n}}^2} \right)$, ${C_{{R_n}}} \!=\! {B_{{R_n}}}\sigma _{{e_{ST}}}^2$, ${M_{{R_n}}} \!= \!{B_{{R_n}}}\sigma _{{e_{T{R_n}}}}^2$, ${Q_{{R_n}}} = \gamma \left( {{a_1} + \varpi {\varphi _J} + \kappa _{S{R_n}}^2} \right)$, ${\psi _{{R_n}}} = {B_{{R_n}}}\sigma _{{e_{T{R_n}}}}^2\sigma _{{e_{ST}}}^2 + \sigma _{{e_{S{R_n}}}}^2\left( {1 + \varpi {\varphi _J} + \kappa _{S{R_n}}^2} \right)$, ${O_{{R_n}}} = \left( {\varepsilon {a_2} + \varpi {\varphi _J} + \kappa _{S{R_n}}^2} \right)$, ${m_{{R_n}}} = {\beta ^2}\left( {\kappa _{S{R_n}}^2 + \varpi {\varphi _J}} \right)$, ${\xi _{{R_n}}} = \left( {\varepsilon  + \varpi {\varphi _J} + \kappa _{S{R_n}}^2} \right)$;
${B_{{E}}} \!=\! {\beta ^2}\left( {1 \!+\!  {\varphi _J} \!+\! \kappa _{S{E}}^2} \right)$, ${C_{{E}}} \!=\! {B_{{E}}}\sigma _{{e_{ST}}}^2$, ${M_{{E}}} \!= \!{B_{{E}}}\sigma _{{e_{T{E}}}}^2$, ${Q_{{E}}} = \gamma \left( {{a_1} +  {\varphi _J} + \kappa _{S{E}}^2} \right)$, ${\psi _{{E}}} = {B_{{E}}}\sigma _{{e_{T{E}}}}^2\sigma _{{e_{ST}}}^2 + \sigma _{{e_{S{E}}}}^2\left( {1 +  {\varphi _J} + \kappa _{S{E}}^2} \right)$, ${O_{{E}}} = \left( {\varepsilon {a_2} + {\varphi _J} + \kappa _{S{E}}^2} \right)$, ${m_{{E}}} = {\beta ^2}\left( {\kappa _{S{E}}^2 +  {\varphi _J}} \right)$, ${\xi _{{E}}} = \left( {\varepsilon  +  {\varphi _J} + \kappa _{S{E}}^2} \right)$.

\section{Performance Analysis}\label{sec3}
In this section, we investigate the reliability and security of the ambient backscatter NOMA systems in term of OP and IP. In addition, the asymptotic OP and diversity orders in the high SNR regions are examined, as well as the the asymptotic IP in the high MER regime.
\subsection{OP Analysis}

\emph{1) OP for $R_f$}

The outage event occurs at $R_f$ when $R_f$ cannot successfully decode ${x_2}$. Thus, the OP at $R_f$ can be expressed as
\begin{equation}
\label{6}
P_{out}^{{R_f}} = 1 - {{\rm{P}}_r}\left( {\gamma _{{R_f}}^{{x_2}} > {\gamma _{th2}^{{R_f}}}} \right),
\end{equation}
where ${\gamma _{th2}^{{R_f}}}$ is the target rate of $R_f$.

\begin{theorem}
For Rayleigh fading channels, the analytical expression for the OP of the far reader can be obtained as
\begin{equation}
\label{7}
P_{out}^{{R_f}} \!=\! 1 \!+\! {\Delta _2^{{R_f}}}{e^{{\Delta _1^{{R_f}}} \!-\! {\Delta _3^{{R_f}}} \!-\! \frac{{{\gamma _{th2}^{{R_f}}}}}{{{\lambda _{S{R_f}}}\gamma \left( {{a_2} \!-\! {Q_{{R_f}}}{\gamma _{th2}^{{R_f}}}} \right)}}}}{\rm{Ei}}\left( { \!-\! {\Delta _1^{{R_f}}}} \right),
\end{equation}
\end{theorem}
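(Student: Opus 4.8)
The plan is to evaluate $\Pr(\gamma_{R_f}^{x_2} > \gamma_{th2}^{R_f})$, which by \eqref{6} determines $P_{out}^{R_f}$, by clearing the denominator in \eqref{3} and then averaging over the three channel gains $|\hat h_{SR_f}|^2$, $|\hat h_{ST}|^2$, $|\hat h_{TR_f}|^2$ one at a time. Substituting \eqref{3} and multiplying through by the (positive) denominator, the event $\gamma_{R_f}^{x_2} > \gamma_{th2}^{R_f}$ becomes
\begin{equation*}
|\hat h_{SR_f}|^2\gamma\bigl(a_2 - Q_{R_f}\gamma_{th2}^{R_f}\bigr) > \gamma_{th2}^{R_f}\Bigl(\gamma\bigl[|\hat h_{ST}|^2\bigl(B_{R_f}|\hat h_{TR_f}|^2 + M_{R_f}\bigr) + C_{R_f}|\hat h_{TR_f}|^2 + \psi_{R_f}\bigr] + 1\Bigr).
\end{equation*}
A case split is needed here: if $a_2 \le Q_{R_f}\gamma_{th2}^{R_f}$ the left-hand side is non-positive while the right-hand side is strictly positive, so the probability is $0$ and $P_{out}^{R_f} = 1$; the non-trivial regime, implicitly assumed in \eqref{7}, is $a_2 > Q_{R_f}\gamma_{th2}^{R_f}$, in which the inequality can be solved for $|\hat h_{SR_f}|^2$ above a threshold depending on $|\hat h_{ST}|^2$ and $|\hat h_{TR_f}|^2$.

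I would then condition on $|\hat h_{ST}|^2$ and $|\hat h_{TR_f}|^2$ and use that, under Rayleigh fading, $|\hat h_{SR_f}|^2$ is exponential, so $\Pr(|\hat h_{SR_f}|^2 > t) = e^{-t/\lambda_{SR_f}}$. Writing $\varrho = \gamma_{th2}^{R_f}/\bigl(\lambda_{SR_f}(a_2 - Q_{R_f}\gamma_{th2}^{R_f})\bigr)$ and pulling out the factors independent of the two remaining gains, this gives $\Pr(\gamma_{R_f}^{x_2} > \gamma_{th2}^{R_f}) = e^{-\Delta_3^{R_f} - \gamma_{th2}^{R_f}/(\lambda_{SR_f}\gamma(a_2 - Q_{R_f}\gamma_{th2}^{R_f}))}\, E\bigl\{e^{-\varrho(|\hat h_{ST}|^2(B_{R_f}|\hat h_{TR_f}|^2 + M_{R_f}) + C_{R_f}|\hat h_{TR_f}|^2)}\bigr\}$, where $\Delta_3^{R_f} = \varrho\,\psi_{R_f}$ absorbs the $\psi_{R_f}$ contribution. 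Averaging next over $|\hat h_{ST}|^2$ (exponential with mean $\lambda_{ST}$) is a moment generating function evaluation and yields $1/\bigl(1 + \lambda_{ST}\varrho(B_{R_f}|\hat h_{TR_f}|^2 + M_{R_f})\bigr)$, leaving a single integral over $z = |\hat h_{TR_f}|^2$.

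The core step is that last integral, $\frac{1}{\lambda_{TR_f}}\int_0^\infty \frac{e^{-(\varrho C_{R_f} + 1/\lambda_{TR_f})z}}{(1 + \lambda_{ST}\varrho M_{R_f}) + \lambda_{ST}\varrho B_{R_f}z}\,dz$. Normalising the linear denominator to the form $z + \nu$ and applying the standard integral $\int_0^\infty \frac{e^{-\mu x}}{x + \nu}\,dx = -e^{\mu\nu}\,\mathrm{Ei}(-\mu\nu)$, valid for $\mu,\nu > 0$, produces the $\mathrm{Ei}(\cdot)$ term with argument $\Delta_1^{R_f} = (1 + \lambda_{ST}\varrho M_{R_f})(\varrho C_{R_f} + 1/\lambda_{TR_f})/(\lambda_{ST}\varrho B_{R_f})$ and prefactor $\Delta_2^{R_f} = 1/(\lambda_{TR_f}\lambda_{ST}\varrho B_{R_f})$. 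I expect the main obstacle to be bookkeeping rather than any deep idea: one must check that $\mu = \varrho C_{R_f} + 1/\lambda_{TR_f} > 0$ and $\nu = (1 + \lambda_{ST}\varrho M_{R_f})/(\lambda_{ST}\varrho B_{R_f}) > 0$ so the identity applies (it does, since $\varpi, \varphi_J, \kappa_{SR_f}^2, \sigma_{e_{ST}}^2, \sigma_{e_{TR_f}}^2 \ge 0$ and $B_{R_f} > 0$), and then fold the separate exponential prefactors $e^{\Delta_1^{R_f}}$, $e^{-\Delta_3^{R_f}}$ and $e^{-\gamma_{th2}^{R_f}/(\lambda_{SR_f}\gamma(a_2 - Q_{R_f}\gamma_{th2}^{R_f}))}$ into the single exponent of \eqref{7}. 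Since $\mathrm{Ei}(-\Delta_1^{R_f}) < 0$, substituting into $P_{out}^{R_f} = 1 - \Pr(\gamma_{R_f}^{x_2} > \gamma_{th2}^{R_f})$ then reproduces exactly the claimed closed form.
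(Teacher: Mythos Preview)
Your proposal is correct and follows essentially the same route as the paper's Appendix~A: solve the SINR inequality for $|\hat h_{SR_f}|^2$, use the exponential tail to get an $e^{-\text{threshold}/\lambda_{SR_f}}$, then average over the two remaining exponential gains and reduce the last integral to the exponential-integral identity $\int_0^\infty e^{-\mu x}/(x+\nu)\,dx=-e^{\mu\nu}\mathrm{Ei}(-\mu\nu)$. The only cosmetic difference is that the paper integrates $|\hat h_{TR_f}|^2$ before $|\hat h_{ST}|^2$ (so the final $\mathrm{Ei}$ integral is over the $S\!\to\!T$ gain), whereas you swap that order; by the symmetry of the bilinear term $B_{R_f}|\hat h_{ST}|^2|\hat h_{TR_f}|^2$ this simply interchanges the roles of $(M_{R_f},\lambda_{ST})$ and $(C_{R_f},\lambda_{TR_f})$ in your intermediate $\mu\nu$, and a quick expansion confirms your $\mu\nu$ coincides with the paper's $\Delta_1^{R_f}$ and your prefactor with $\Delta_2^{R_f}$.
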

\noindent where $\Delta _1^i \!\!=\! \left( {\frac{{{M_i}{\gamma _{th2}^{{i}}}}}{{{\lambda _{Si}}\left( {{a_2} \!-\! {Q_i}\gamma _{th2}^i} \right)}} \!+\! \frac{1}{{{\lambda _{ST}}}}} \right)\frac{{{\lambda _{Si}}\left( {{a_2} \!-\! {Q_i}\gamma _{th2}^i} \right) \!+\! {\lambda _{Ti}}{C_i}\gamma _{th2}^i}}{{{\lambda _{Ti}}{B_i}\gamma _{th2}^i}}$, $\Delta _2^{i} \!=\! \frac{{{\lambda _{Si}}\left(\! {{a_2} \!-\! {Q_i}\gamma _{th2}^i} \!\right)}}{{{\lambda _{ST}}{\lambda _{Ti}}{B_i}\gamma _{th2}^i}}$, $\Delta _3^i\! =\! \frac{{{\psi_i}{\gamma _{th2}^i}}}{{{\lambda _{Si}}\left(\! {{a_2} \!-\! {Q_i}\gamma _{th2}^i} \!\right)}} $, (${i \in \{ {R_n},{R_f},E\} } $).
${\rm{Ei}}\left( p \right)$ is the exponential integral function \cite{31} and is expressed by
\begin{equation}
\label{8}
{\rm{Ei}}\left( p \right) \!=\! \frac{{{{\left( { \!-\! p} \right)}^{i \!-\! 1}}}}{{\left( {i \!-\! 1} \right)!}}\left[ { - \ln p \!+\! \psi \left( i \right)} \right] \!-\! \sum\limits_{m = 0}^\infty  {\frac{{{{( \!-\! p)}^m}}}{{\left( {m \!-\! i \!+\! 1} \right)m!}}} ,
\end{equation}
with
\begin{equation}
\label{9}
\left\{ \begin{array}{l}
\psi \left( 1 \right) =  - \upsilon \\
\psi \left( i \right) =  - \upsilon  + \sum\limits_{m = 1}^{i - 1} {\frac{1}{m}{\rm{    }} \;\;\;\;i > 1}
\end{array} \right. ,
\end{equation}
where $\nu  \approx 0.577$ is the Euler constant.
\begin{proof}
See Appendix A.
\end{proof}

\begin{corollary}
At high SNRs, the asymptotic expression for the OP of $R_f$ of the ambient backscatter NOMA systems is given as
\begin{equation}
\label{10}
P_{out,\infty }^{{R_f }} = 1 + {\Delta _2^{{R_f}}}{e^{{\Delta _1^{{R_f}}} - {\Delta _3^{{R_f}}}}}{\rm{Ei}}\left( { - {\Delta _1^{{R_f}}}} \right).
\end{equation}
\end{corollary}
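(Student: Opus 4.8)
\emph{Proof proposal for Corollary~1.} The plan is to read off~\eqref{10} directly from the exact expression~\eqref{7} by letting the transmit SNR $\gamma=P_S/N_0$ tend to infinity. First I would note that the fading parameters (in particular $\lambda_{SR_f}$), the composite coefficients $B_{R_f},C_{R_f},M_{R_f},Q_{R_f},\psi_{R_f}$ defined after~\eqref{5}, and the decoding threshold $\gamma_{th2}^{R_f}$ are all held fixed as $\gamma$ grows; consequently $\Delta_1^{R_f},\Delta_2^{R_f},\Delta_3^{R_f}$ and the argument $-\Delta_1^{R_f}$ of ${\rm Ei}(\cdot)$ carry no $\gamma$-dependence either. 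Hence the only $\gamma$-dependent factor on the right-hand side of~\eqref{7} is $\exp\!\big(-\tfrac{\gamma_{th2}^{R_f}}{\lambda_{SR_f}\gamma(a_2-Q_{R_f}\gamma_{th2}^{R_f})}\big)$, which can be traced back to the normalised AWGN term ``$+1$'' in the denominator of the SINR in~\eqref{3}.

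Next I would take the limit. In the non-degenerate regime $a_2>Q_{R_f}\gamma_{th2}^{R_f}$ (otherwise $\gamma_{R_f}^{x_2}$ in~\eqref{3} never exceeds $\gamma_{th2}^{R_f}$ even for arbitrarily large $\gamma$, so $P_{out}^{R_f}\equiv1$), that exponent is $O(1/\gamma)$ and therefore tends to $0$, so the exponential factor tends to $1$. Since every remaining factor in~\eqref{7} is already $\gamma$-free, $\lim_{\gamma\to\infty}P_{out}^{R_f}$ is obtained simply by deleting that factor, which is exactly~\eqref{10}.

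As an independent check I would re-derive~\eqref{10} from scratch: letting $\gamma\to\infty$ directly in~\eqref{3} kills the ``$+1$'' and leaves the SNR-free limiting SINR $\bar\gamma_{R_f}^{x_2}=\frac{a_2|\hat{h}_{SR_f}|^2}{|\hat{h}_{ST}|^2(B_{R_f}|\hat{h}_{TR_f}|^2+M_{R_f})+C_{R_f}|\hat{h}_{TR_f}|^2+Q_{R_f}|\hat{h}_{SR_f}|^2+\psi_{R_f}}$; then repeating the steps of Appendix~A for $1-\Pr(\bar\gamma_{R_f}^{x_2}>\gamma_{th2}^{R_f})$ --- conditioning on $|\hat{h}_{ST}|^2$ and $|\hat{h}_{TR_f}|^2$, invoking the exponential CDF of $|\hat{h}_{SR_f}|^2$, and integrating the two remaining exponential variables --- should reproduce~\eqref{10} verbatim. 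This route also makes transparent that $P_{out,\infty}^{R_f}$ is a strictly positive constant, i.e., an error floor with zero diversity order, controlled by the CEE/distortion coefficients $C_{R_f},M_{R_f},\psi_{R_f}$ and by the reflection coefficient $\beta$ entering $B_{R_f}$. I do not anticipate a real obstacle here: the computation is a routine limit, and the only things to watch are the bookkeeping that keeps $\Delta_1^{R_f},\Delta_2^{R_f},\Delta_3^{R_f}$ genuinely independent of $\gamma$, and the separate treatment of the degenerate case $a_2\le Q_{R_f}\gamma_{th2}^{R_f}$.
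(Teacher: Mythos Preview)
Your proposal is correct and matches the paper's (implicit) derivation: Corollary~1 is stated without a separate proof, and the intended argument is precisely to let $\gamma\to\infty$ in~\eqref{7} so that the sole $\gamma$-dependent factor $\exp\!\big(-\tfrac{\gamma_{th2}^{R_f}}{\lambda_{SR_f}\gamma(a_2-Q_{R_f}\gamma_{th2}^{R_f})}\big)\to 1$, leaving~\eqref{10}. Your independent re-derivation via the limiting SINR is a nice sanity check but is not needed for the paper's purposes.
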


\emph{2) OP for $R_n$}

To successfully decode ${x_1}$ at $R_n$, two conditions are needed to be met simultaneously: 1) $R_n$ can successfully decode ${x_2}$; 2) $R_n$ can successfully decode its own information ${x_1}$. Therefore, the OP of $R_n$ can be expressed as
\begin{equation}
\label{11}
P_{out}^{{R_n}} = 1 - {{\rm{P}}_r}\left( {\gamma _{{R_n}}^{{x_2}} > {\gamma _{th2}^{{R_n}}},\gamma _{{R_n}}^{{x_1}} > {\gamma _{th1}^{{R_n}}}} \right),
\end{equation}
where ${\gamma _{th1}^{{R_n}}}$ is the target rate of $R_n$.
\begin{theorem}
For Rayleigh fading channels, the analytical expression for the OP of the near reader can be obtained as
\begin{equation}
\label{15}
P_{out}^{{R_n}} \!=\!\! 1 \!+\! \frac{{{\lambda _{S{R_n}}}}}{{{\lambda _{ST}}{\varsigma _{{R_n}}} {\lambda _{T{R_n}}}{B_{{R_n}}}}}\!{e^{ \!-\! \left( {\frac{{\varsigma _{{R_n}}} }{{{\lambda _{S{R_n}}}\gamma }}  \!+\! {\Delta _4^{{R_n}}}} \right)}}{\rm{Ei}}\!\left(\! { \!-\! {\Delta _4^{{R_n}}}}\! \right),
\end{equation}
where ${\varsigma _i} = \max \left\{ {\frac{{\gamma _{th1}^i}}{{{a_1} - {O_i}\gamma _{th1}^i}},\frac{{\gamma _{th2}^i}}{{{a_2} - {Q_i}\gamma _{th2}^i}}} \right\}$, $\Delta _4^i = \frac{{\left( {{\lambda _{ST}}{\varsigma _i}{M_i} + {\lambda _{S{i}}}} \right)\left( {{\lambda _{Si}} + {\varsigma _i}{\lambda _{Ti}}{C_i}} \right)}}{{{\lambda _{Si}}{\lambda _{ST}}{\varsigma _i}{\lambda _{Ti}}{B_i}}}+ \frac{{{\varsigma _i} {\psi_{i}}}}{{{\lambda _{S{i}}}}}$, (${i \in \{ {R_n},{R_f},E\} } $).
\end{theorem}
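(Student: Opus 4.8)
The plan is to reduce the two-fold outage event in \eqref{11} to a single threshold event on $|\hat h_{SR_n}|^2$, condition on the remaining channel gains, and evaluate the resulting double integral with the same exponential-integral identity used in Appendix A for Theorem 1. I would start from \eqref{11} and rewrite each of $\gamma_{R_n}^{x_2} > \gamma_{th2}^{R_n}$ and $\gamma_{R_n}^{x_1} > \gamma_{th1}^{R_n}$ via \eqref{3}--\eqref{4}: moving the $Q_{R_n}|\hat h_{SR_n}|^2$ (respectively $O_{R_n}|\hat h_{SR_n}|^2$) term to the left and dividing, each inequality becomes $|\hat h_{SR_n}|^2 > \tfrac{\varsigma^{(j)}}{\gamma}\big(\gamma\big[|\hat h_{ST}|^2(B_{R_n}|\hat h_{TR_n}|^2 + M_{R_n}) + C_{R_n}|\hat h_{TR_n}|^2 + \psi_{R_n}\big] + 1\big)$, with $\varsigma^{(2)} = \gamma_{th2}^{R_n}/(a_2 - Q_{R_n}\gamma_{th2}^{R_n})$ and $\varsigma^{(1)} = \gamma_{th1}^{R_n}/(a_1 - O_{R_n}\gamma_{th1}^{R_n})$, valid when $a_2 > Q_{R_n}\gamma_{th2}^{R_n}$ and $a_1 > O_{R_n}\gamma_{th1}^{R_n}$ (and $P_{out}^{R_n}=1$ otherwise). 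The crucial observation is that the bracketed ``interference-plus-noise'' aggregate is the same for both signals, so the intersection of the two events is obtained simply by replacing $\varsigma^{(j)}$ with $\varsigma_{R_n} = \max\{\varsigma^{(1)},\varsigma^{(2)}\}$, which is exactly the quantity in the statement.

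Next, I would condition on $|\hat h_{ST}|^2$ and $|\hat h_{TR_n}|^2$, which are independent and exponentially distributed with means $\lambda_{ST}$ and $\lambda_{TR_n}$. Using the exponential tail of $|\hat h_{SR_n}|^2$, the probability of the reduced event given these two gains equals $\exp\!\big(-\tfrac{\varsigma_{R_n}}{\lambda_{SR_n}}\big[|\hat h_{ST}|^2(B_{R_n}|\hat h_{TR_n}|^2 + M_{R_n}) + C_{R_n}|\hat h_{TR_n}|^2 + \psi_{R_n} + 1/\gamma\big]\big)$. I would integrate over $|\hat h_{ST}|^2$ first, since it enters linearly with the $|\hat h_{TR_n}|^2$-dependent coefficient $B_{R_n}|\hat h_{TR_n}|^2 + M_{R_n}$; this integral is elementary and leaves a factor $\lambda_{SR_n}/\big(\lambda_{ST}\varsigma_{R_n}(B_{R_n}|\hat h_{TR_n}|^2 + M_{R_n}) + \lambda_{SR_n}\big)$, multiplied by a residual exponential in $|\hat h_{TR_n}|^2$ and the constant $e^{-\varsigma_{R_n}(\psi_{R_n} + 1/\gamma)/\lambda_{SR_n}}$.

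Finally, the remaining $|\hat h_{TR_n}|^2$-integral has the form $\int_0^\infty \frac{e^{-\mu y}}{y + \beta}\,dy = -e^{\mu\beta}\,\mathrm{Ei}(-\mu\beta)$ \cite{31}, with $\beta = (\lambda_{ST}\varsigma_{R_n}M_{R_n} + \lambda_{SR_n})/(\lambda_{ST}\varsigma_{R_n}B_{R_n})$ and $\mu = \varsigma_{R_n}C_{R_n}/\lambda_{SR_n} + 1/\lambda_{TR_n}$. Forming the product $\mu\beta$, absorbing it together with the constants pulled out above, and substituting into $P_{out}^{R_n} = 1 - \Pr(\cdot)$ yields \eqref{15}, with the exponent collapsing to $\varsigma_{R_n}/(\lambda_{SR_n}\gamma) + \Delta_4^{R_n}$ and the prefactor to $\lambda_{SR_n}/(\lambda_{ST}\varsigma_{R_n}\lambda_{TR_n}B_{R_n})$; the series representation of $\mathrm{Ei}(\cdot)$ in \eqref{8}--\eqref{9} is standard.

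Once the reduction in the first step is in place the remaining calculation is routine, so the main obstacle is really twofold: (i) checking carefully that the two SINR inequalities share the same interference-plus-noise aggregate, so that their intersection is captured by a single $\max$ — this is precisely what makes $R_n$'s two-constraint OP no harder to evaluate than $R_f$'s single-constraint OP in Theorem 1; and (ii) the algebraic bookkeeping that compresses $\mu\beta$ and the leftover constants into the compact $\Delta_4^{R_n}$, where sign and grouping slips are easiest to make.
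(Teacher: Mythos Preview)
Your proposal is correct and follows essentially the same approach as the paper: the paper's proof simply says to substitute \eqref{3}--\eqref{4} into \eqref{11} and proceed ``as in the proof of \textbf{Theorem 1},'' and your reduction of the two SINR constraints to a single threshold via $\varsigma_{R_n}=\max\{\cdot,\cdot\}$ followed by the exponential-integral identity is exactly that. The only cosmetic difference is that Appendix~A integrates out $|\hat h_{TR_f}|^2$ before $|\hat h_{ST}|^2$ (ending with a $z$-integral and the substitution $u=z+\alpha_2$), whereas you swap the order and finish with the $|\hat h_{TR_n}|^2$-integral; both routes land on the same $\int_0^\infty e^{-\mu t}/(t+\beta)\,dt=-e^{\mu\beta}\mathrm{Ei}(-\mu\beta)$ identity.
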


\begin{proof}
By substituting (3) and (4) into (11), we can obtain the result of (12) after some mathematical manipulations, as in the proof of \textbf{Theorem 1}.
\end{proof}

\begin{corollary}
At high SNRs, the asymptotic expression for the OP of $R_n$ of the ambient backscatter NOMA systems is given as
\begin{equation}
\label{16}
P_{out,\infty }^{{R_n}} \!\!=\!\! 1 \!+\! \frac{{{\lambda _{S{R_n}}}}}{{{\lambda _{ST}}{\varsigma _{{R_n}}} {\lambda _{T{R_n}}}{B_{{R_n}}}}}{e^{ - \left( {\frac{{{\varsigma _{{R_n}}} {\psi_{{R_n}}}}}{{{\lambda _{S{R_n}}}}} \!+\! {\Delta _4^{{R_n}}}} \right)}}{\rm{Ei}}\left( { \!-\! {\Delta _4^{{R_n}}}} \right).
\end{equation}
\end{corollary}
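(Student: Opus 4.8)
The plan is to obtain \eqref{16} as the high-SNR ($\gamma\to\infty$) limit of the exact outage probability \eqref{15} already established in \textbf{Theorem 2}. Write $X=|\hat h_{SR_n}|^2$, $Y=|\hat h_{ST}|^2$ and $Z=|\hat h_{TR_n}|^2$, which under Rayleigh fading are independent exponential variables with means $\lambda_{SR_n}$, $\lambda_{ST}$ and $\lambda_{TR_n}$. The structural fact behind \eqref{15} is that the joint decoding event at $R_n$ in \eqref{11} is equivalent to $X>\varsigma_{R_n}W$, where $W=Y(B_{R_n}Z+M_{R_n})+C_{R_n}Z+\psi_{R_n}+1/\gamma$ is the interference-plus-noise budget at $R_n$; here the transmit SNR enters only through the additive $1/\gamma$, the sole residue of the thermal-noise term ``$+1$'' in the denominators of \eqref{3}--\eqref{4}, while $\varsigma_{R_n}$, $B_{R_n}$, $C_{R_n}$, $M_{R_n}$, $\psi_{R_n}$ (hence $\Delta_4^{R_n}$ and the prefactor of \eqref{15}) are SNR-independent.

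From here the computation is short. Directly, sending $\gamma\to\infty$ in \eqref{15} suppresses the $\gamma$-dependent term in the exponent and returns \eqref{16}. Equivalently, as a self-contained check, one re-runs the two integrations behind \textbf{Theorem 2} with $W$ replaced by its interference-limited form $W_\infty=Y(B_{R_n}Z+M_{R_n})+C_{R_n}Z+\psi_{R_n}$: condition on $(Y,Z)$ and use the exponential tail $\Pr\{X>\varsigma_{R_n}W_\infty\mid Y,Z\}=e^{-\varsigma_{R_n}W_\infty/\lambda_{SR_n}}$; average over $Y$, an elementary integral yielding $\lambda_{SR_n}/(\lambda_{SR_n}+\varsigma_{R_n}\lambda_{ST}(B_{R_n}Z+M_{R_n}))$; and average over $Z$ via the standard exponential-integral identity $\int_0^\infty\frac{e^{-Dz}}{A+B'z}\,dz=-\frac{1}{B'}e^{AD/B'}\,\mathrm{Ei}(-AD/B')$ (see \cite{31}; cf. \eqref{8}), with $A=\lambda_{SR_n}+\varsigma_{R_n}\lambda_{ST}M_{R_n}$, $B'=\varsigma_{R_n}\lambda_{ST}B_{R_n}$ and $D=\varsigma_{R_n}C_{R_n}/\lambda_{SR_n}+1/\lambda_{TR_n}$. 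Then $P_{out,\infty}^{R_n}=1-\Pr\{X>\varsigma_{R_n}W_\infty\}$, and tidying the constants reproduces the prefactor $\lambda_{SR_n}/(\lambda_{ST}\varsigma_{R_n}\lambda_{TR_n}B_{R_n})$ together with the exponent and the $\mathrm{Ei}(-\Delta_4^{R_n})$ factor of \eqref{16}.

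The main obstacle is the constant bookkeeping, not any probabilistic subtlety. Two points deserve care: (i) confirming that nothing besides the suppressed $1/\gamma$ carries residual SNR-dependence — in particular that $\varsigma_{R_n}$ and $\Delta_4^{R_n}$ genuinely stabilise in the limit; and (ii) tracking how the $\psi_{R_n}$-term and the $1/\gamma$-term are partitioned between the exponent and the argument of $\mathrm{Ei}(\cdot)$, so that suppressing the latter leaves precisely the exponent $-(\varsigma_{R_n}\psi_{R_n}/\lambda_{SR_n}+\Delta_4^{R_n})$ of \eqref{16} — equivalently, so that \eqref{16} is exactly \eqref{15} with $1/\gamma$ set to zero. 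Point (i) also supplies the qualitative content of the corollary: since $B_{R_n}=\beta^2(1+\varpi\varphi_J+\kappa_{SR_n}^2)$, and $C_{R_n}$, $M_{R_n}$, $\psi_{R_n}$ are governed by the reflection coefficient $\beta$ and the CEE variances and stay bounded away from zero, the limiting SINR is interference-limited, $\Delta_4^{R_n}$ tends to a strictly positive constant, and $P_{out,\infty}^{R_n}$ settles at a nonzero value strictly below one — the OP error floor noted in the abstract.
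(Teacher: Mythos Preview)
Your approach is exactly the paper's (implicit) one: Corollary~2 is not given a separate proof there, and is obtained simply by letting $\gamma\to\infty$ in the exact expression of Theorem~2, just as Corollary~1 follows from Theorem~1. Your alternative self-contained recomputation via the three exponential integrations and the $\mathrm{Ei}$ identity is also the same machinery used in Appendix~A for Theorem~1/Theorem~2, so there is no methodological difference.

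One remark on your bookkeeping point~(ii), which you rightly flagged as the only delicate spot. With the paper's own definition $\Delta_4^{R_n}=\frac{(\lambda_{ST}\varsigma_{R_n}M_{R_n}+\lambda_{SR_n})(\lambda_{SR_n}+\varsigma_{R_n}\lambda_{TR_n}C_{R_n})}{\lambda_{SR_n}\lambda_{ST}\varsigma_{R_n}\lambda_{TR_n}B_{R_n}}+\frac{\varsigma_{R_n}\psi_{R_n}}{\lambda_{SR_n}}$, the $\psi_{R_n}$ contribution is already absorbed into $\Delta_4^{R_n}$. Hence sending $\gamma\to\infty$ in Theorem~2's exponent $-\bigl(\varsigma_{R_n}/(\lambda_{SR_n}\gamma)+\Delta_4^{R_n}\bigr)$ yields $-\Delta_4^{R_n}$, not $-\bigl(\varsigma_{R_n}\psi_{R_n}/\lambda_{SR_n}+\Delta_4^{R_n}\bigr)$. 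The extra $\varsigma_{R_n}\psi_{R_n}/\lambda_{SR_n}$ displayed in \eqref{16} therefore appears to be a typographical redundancy in the paper's statement rather than something your limit should reproduce; do not try to manufacture it from the analysis.
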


\emph{3) OP for $T$}

The T signals can be successfully decoded when ${x_2}$ and ${x_1}$ are perfectly decoded at $R_n$. Thus, the OP of BD can be expressed as
\begin{equation}
\label{14}
P_{out}^{T} \!=\! 1 \!-\! {{\rm{P}}_r}\left( {\gamma _{{R_n}}^{{x_2}} > {\gamma _{th2}^{{R_n}}},\gamma _{{R_n}}^{{x_1}} > {\gamma _{th1}^{{R_n}}},\gamma _{{R_n}}^{c\left( t \right)} > {\gamma _{thc}^{{R_n}}}} \right),
\end{equation}
where ${\gamma _{thc}}$ is the target rate for $R_n$ decoding tag signals.
\begin{theorem}
For Rayleigh fading channels, we have

$\bullet$ Non-ideal conditions

The analytical expression for the OP of T in (15) is provided at the top of next page.
\begin{figure*}[!t]\label{18}
\normalsize
\begin{align}\nonumber
\label{15}
&P_{out}^{T,ni} = 1 - \frac{{2{\lambda _{S{R_n}}}}}{{{\lambda _{T{R_n}}}{\lambda _{ST}}{\varsigma _{{R_n}}}{B_{{R_n}}}}}{e^{ - \left( {{B_5} + \frac{{{\lambda _{T{R_n}}}{\varsigma _{{R_n}}}{B_{{R_n}}}{\gamma _{thc}}}}{{{\lambda _{S{R_n}}}{\lambda _{T{R_n}}}\gamma {\Delta _5}}} + \frac{{{\varsigma _{{R_n}}}}}{{{\lambda _{S{R_n}}}\gamma }}} \right)}}\sum\limits_{v = 1}^\infty  {{{\left( { - 1} \right)}^v}\frac{1}{{B_4^v}}} {\left( {\frac{{\left( {{B_3} + {\Delta _6}} \right)}}{{{B_1}}}} \right)^{\frac{v}{2}}}{K_v}\left( {2\sqrt {\left( {{B_3} + {\Delta _6}} \right){B_1}} } \right) + \\
&\frac{{{\lambda _{S{R_n}}}{\xi_{{R_n}}}\gamma _{thc}^{{R_n}}}}{{{\lambda _{ST}}{\lambda _{T{R_n}}}\Delta _5^{{R_n}}}}{e^{A_2^{{R_n}}}}\!\!\!\left(\! {\frac{\pi }{N}\!\sum\limits_{k = 0}^N \!{{e^{ \!-\! \left(\! {\frac{{2\left(\! {A_3^{{R_n}} \!+\! \Delta _8^{{R_n}}} \!\right)}}{{A_4^{{R_n}}\left( {{\vartheta _k} \!+ \!1} \right)}} - \frac{{A_1^{{R_n}}A_4^{{R_n}}\left( \!{{\vartheta _k} \!+\! 1} \!\right)}}{2}} \!\right)}}\!\!\sqrt {1 \!-\! \vartheta _k^2} \left(\! {\frac{1}{{{\vartheta _k} \!+\! 3}} \!-\! \frac{1}{{{\vartheta _k}\! +\! 1}}} \right) \!+\! 2{K_0}\left(\! {2\sqrt { \!- \!A_1^{{R_n}}\left( \!{A_3^{{R_n}} \!+\! \Delta _8^{{R_n}}} \!\right)} } \!\right)} } \!\right).
\end{align}
\hrulefill \vspace*{0pt}
\end{figure*}

\noindent{In (15), ${\vartheta _k} = \cos \left[ {\left( {2k - 1} \right)\pi /\left( {2N} \right)} \right]$, N is an accuracy-complexity trade-off parameter. $\Delta _5^i = {\beta ^2} - {m_{i}}\gamma _{thc}^i $, ${\Delta _6} = \left[\! {{\lambda _{\!T\!{R_n}}}{\varsigma _{{R_n}}}{C_{{\!R_n}}}\gamma _{\!thc}^{{\!R_n}}\left( \!{{B_{{\!R_n}}}\gamma \gamma _{thc}^{{R_n}} \!\!+\!\! {\Delta _5}}\! \right) \!+\! {\lambda _{\!S{R_n}}}\gamma _{thc}^{{R_n}}\Delta _5^{{\!R_n}}} \!\right]\!/\!\left( \!{\Delta _5^{{R_n}}\gamma } \!\right)$, $\Delta _7^i = \left( {{\lambda _{Si}}{\xi_i} - {\lambda _{Ti}}{C_i}} \right)\gamma _{thc}^i$, $\Delta _8^i = \left( {{\lambda _{Ti}}{C_i}\gamma _{thc}^i + \Delta _7^i} \right)\gamma $,
$A_1^i \!\!=\!\!  \frac{-1}{{{\lambda _{Si}}{\xi_i}{\lambda _{ST}}{\lambda _{Ti}}\Delta _5^i{\gamma ^2}}}$,
$A_2^i =  - \left( {\frac{{{C_i}\gamma _{thc}^i}}{{{\lambda _{ST}}\Delta _5^i}} + \frac{{{M_i}\gamma _{thc}^i}}{{{\lambda _{Ti}}\Delta _5^i}}} \right)$,
$A_3^i = \frac{{{\lambda _{Ti}}{M_i}{{\left( {{C_i}\gamma \gamma _{thc}^i} \right)}^2} +\; \left( {{\lambda _{Ti}}{\psi_i}\Delta _5^i + \Delta _7^i{M_i}} \right){C_i}{\gamma ^2}\gamma _{thc}^i}}{{\Delta _5^i}} \;+ \Delta _7^i{\psi_i}{\gamma ^2}$,\\
$A_4^i \!=\! \lambda _{Si}^2\xi_i^2\gamma _{thc}^i$,
${B_1} \!\!=\!\! \frac{{{\lambda _{S{R_n}}} \!+\! {\lambda _{ST}}{\varsigma _{{R_n}}}{M_{{R_n}}}}}{{\lambda _{S{R_n}}^2{\lambda _{ST}}{\lambda _{T{R_n}}}\gamma \Delta _5^{{R_n}}}} \!+\! \frac{{{\varsigma _{{R_n}}}{C_{{R_n}}}{B_{{R_n}}}\gamma _{thc}^{{R_n}}}}{{\lambda _{S{R_n}}^2{\lambda _{T{R_n}}}{{\left( {\Delta _5^{{R_n}}} \right)}^2}}}$, (${i \in \{ {R_n},{R_f},E\} } $),
${B_5} = \frac{{\left( {{\lambda _{ST}}{\varsigma _{{R_n}}}{C_{{R_n}}} + {\lambda _{S{R_n}}}} \right){C_{{R_n}}}\gamma _{thc}^{{R_n}}}}{{{\lambda _{ST}}{\lambda _{S{R_n}}}\Delta _5^{{R_n}}}} + {B_2} + \frac{{{\varsigma _{{R_n}}}{\psi_{{R_n}}}}}{{{\lambda _{S{R_n}}}}}$, where ${B_2}$, ${B_3}$ and ${B_4}$ are provided at the top of the next page.}
\begin{figure*}[!t]\label{18}
\normalsize
\begin{align}
\label{18}
{B_2} =& \frac{{2{\varsigma _{{R_n}}}{B_{{R_n}}}{M_{{R_n}}}{C_{{R_n}}}{{\left( {\gamma _{thc}^{{R_n}}} \right)}^2}}}{{{\lambda _{S{R_n}}}{{\left( {\Delta _5^{{R_n}}} \right)}^2}}}{\rm{ + }}\frac{{\left( {{\lambda _{B{D_n}}}{\varsigma _{{R_n}}}{C_{{R_n}}}{M_{{R_n}}} + {\lambda _{S{R_n}}}{M_{{R_n}}} + {\lambda _{B{R_n}}}{\varsigma _{{R_n}}}{B_{{R_n}}}{\psi _{{R_n}}}} \right)\gamma _{thc}^{{R_n}}}}{{{\lambda _{S{R_n}}}{\lambda _{B{R_n}}}\Delta _5^{{R_n}}}},\\ \nonumber
{B_3} =& \left[ {{\lambda _{T{R_n}}}{\varsigma _{{R_n}}}{M_{{R_n}}}C_{{R_n}}^2\gamma {{\left( {\gamma _{thc}^{{R_n}}} \right)}^2}\left( {{B_{{R_n}}}\gamma _{thc}^{{R_n}} + \Delta _5^{{R_n}}} \right) + \left( {{\lambda _{S{R_n}}}{M_{{R_n}}} + {\lambda _{T{R_n}}}{\varsigma _{{R_n}}}{B_{{R_n}}}{\psi_{{R_n}}}} \right){C_{{R_n}}}\gamma {{\left( {\Delta _5^{{R_n}}\gamma _{thc}^{{R_n}}} \right)}^2}} \right]/{\left( {\Delta _5^{{R_n}}} \right)^2} \\
&+ \left( {{\lambda _{T{R_n}}}{\varsigma _{{R_n}}}{C_{{R_n}}} + {\lambda _{S{R_n}}}} \right){\psi_{{R_n}}}\gamma \gamma _{thc}^{{R_n}},\\
{B_4} =& \left[ {{\varsigma _{{R_n}}}{\lambda _{S{R_n}}}{\lambda _{B{R_n}}}{C_{{R_n}}}\gamma \left( {{B_{{R_n}}}\gamma _{thc}^{{R_n}} + \Delta _5^{{R_n}}} \right) + \lambda _{S{R_n}}^2\Delta _5^{{R_n}}\gamma } \right]/\left( {{\varsigma _{{R_n}}}{B_{{R_n}}}} \right).
\end{align}
\hrulefill \vspace*{0pt}
\end{figure*}

$\bullet$ Ideal conditions

For ideal conditions, the analytical expression for the OP of the BD in (19) is provided at the top of next page.
\begin{figure*}[!t]\label{18}
\normalsize
\begin{align}\nonumber
\label{15}
P_{out}^{T,id} =& 1 + {\Delta _9}{e^{{\Delta _9} - \frac{{{\varsigma _{{R_n}}}}}{{{\lambda _{S{R_n}}}\gamma}}}}{\rm{Ei}}\left( { - {\Delta _9}} \right) + \frac{{\gamma _{thc}^{{R_n}}\pi }}{{N{\lambda _{T{R_n}}}{\lambda _{ST}}\gamma\Delta _5^{{R_n}}}}\sum\limits_{k = 0}^N {{e^{ - \left( {{\varsigma _{{R_n}}}{B_{{R_n}}}{\Delta _{10}}{\rm{ + }}\frac{{{\varsigma _{{R_n}}}}}{{{\lambda _{S{R_n}}}\gamma}}} \right)}}{K_0}\left( {2\sqrt {{\Delta _{10}}} } \right)} \sqrt {1 - \vartheta _k^2}   \\
&-{\Delta _{11}}{e^{{\Delta _{11}} + \frac{1}{{{\lambda _{S{R_n}}}\gamma{\xi_{{R_n}}}}}}}{\rm{Ei}}\left( { - {\Delta _{11}}} \right) - \frac{{\gamma _{thc}^{{R_n}}\pi }}{{N{\lambda _{T{R_n}}}{\lambda _{ST}}\gamma\Delta _5^{{R_n}}}}\sum\limits_{k = 0}^N {{e^{\frac{1}{{{\lambda _{S{R_n}}}\gamma{\xi_{{R_n}}}}} - \frac{{{\vartheta _k} + 1}}{{2{\lambda _{S{R_n}}}\gamma{\xi_{{R_n}}}}}}}{K_0}\left( {2\sqrt {{\Delta _{10}}} } \right)} \sqrt {1 - \vartheta _k^2}  .
\end{align}
\hrulefill \vspace*{0pt}
\end{figure*}

\noindent{In (19), ${\Delta _9} = \frac{{{\lambda _{S{R_n}}}}}{{{\lambda _{T{R_n}}}{\lambda _{ST}}{\varsigma _{{R_n}}}{B_{{R_n}}}}}$, ${\Delta _{10}} = \frac{{\left( {{\vartheta _k} + 1} \right)\gamma _{thc}^{{R_n}}}}{{2{\lambda _{T{R_n}}}{\lambda _{ST}}\gamma \Delta _5^{{R_n}}}}$, ${\Delta _{11}} = \frac{{{\lambda _{S{R_n}}}{\xi_{{R_n}}}\gamma _{thc}^{{R_n}}}}{{{\lambda _{T{R_n}}}{\lambda _{ST}}\Delta _5^{{R_n}}}}$, and ${\Delta _{12}} = \frac{{\left( {{\vartheta _k} + 1} \right)\gamma _{thc}^{{R_n}}}}{{2{\lambda _{T{R_n}}}{\lambda _{ST}}\gamma\Delta _5^{{R_n}}}}$.}

\end{theorem}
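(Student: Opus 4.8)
The plan is to compute the joint success probability $\Pr\!\left(\gamma_{R_n}^{x_2}>\gamma_{th2}^{R_n},\,\gamma_{R_n}^{x_1}>\gamma_{th1}^{R_n},\,\gamma_{R_n}^{c(t)}>\gamma_{thc}^{R_n}\right)$ appearing in (14) by reducing it to an integral over the three independent random variables $X=|\hat h_{ST}|^2$, $Y=|\hat h_{TR_n}|^2$, $Z=|\hat h_{SR_n}|^2$, which under Rayleigh fading are exponential with means $\lambda_{ST}$, $\lambda_{TR_n}$, $\lambda_{SR_n}$. I would integrate out $Z$ first and then evaluate the remaining double integral over $(X,Y)$ in terms of $\mathrm{Ei}(\cdot)$ and $K_v(\cdot)$, falling back on Gauss--Chebyshev quadrature for the one integral that has no elementary closed form; the ideal case then follows by letting the impairment parameters vanish.

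The first step is to merge the events $\{\gamma_{R_n}^{x_2}>\gamma_{th2}^{R_n}\}$ and $\{\gamma_{R_n}^{x_1}>\gamma_{th1}^{R_n}\}$ into one, exactly as in the proof of Theorem~2: clearing denominators in (3)--(4) and folding the two thresholds into $\varsigma_{R_n}=\max\{\cdot,\cdot\}$ so that the binding one is selected, this pair of events becomes $Z>\varsigma_{R_n}\big[X(B_{R_n}Y+M_{R_n})+C_{R_n}Y+\psi_{R_n}\big]+\varsigma_{R_n}/\gamma\equiv L(X,Y)$, valid on $a_2-Q_{R_n}\gamma_{th2}^{R_n}>0$ and $a_1-O_{R_n}\gamma_{th1}^{R_n}>0$ (otherwise $P_{out}^{T}=1$). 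Next, clearing denominators in (5) and moving the $m_{R_n}XY$ term to the left with $\Delta_5^{R_n}=\beta^2-m_{R_n}\gamma_{thc}^{R_n}$, the third event becomes $\Delta_5^{R_n}XY>\gamma_{thc}^{R_n}\big[M_{R_n}X+C_{R_n}Y+\xi_{R_n}Z+\psi_{R_n}+1/\gamma\big]$, i.e.\ $Z<U(X,Y)$ for an explicit $U$. Hence, conditioned on $(X,Y)$, the variable $Z$ is confined to $(L(X,Y),U(X,Y))$, and this interval is non-empty only on a region $\mathcal R\subset\mathbb R_{>0}^2$ cut out by $\{L=U\}$; because both $L$ and $U$ are affine in $Y$ and carry the cross term $XY$, that boundary is a hyperbola-type curve and non-degeneracy forces $\Delta_5^{R_n}>\gamma_{thc}^{R_n}\xi_{R_n}\varsigma_{R_n}B_{R_n}$ (and $\Delta_5^{R_n}>0$).

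With the region fixed, integrating out $Z$ via the exponential CDF gives $\int_{L}^{U}\lambda_{SR_n}^{-1}e^{-z/\lambda_{SR_n}}\,dz=e^{-L/\lambda_{SR_n}}-e^{-U/\lambda_{SR_n}}$, so the success probability splits into the two additive blocks of (15): one carrying $e^{-L/\lambda_{SR_n}}$, one carrying $-e^{-U/\lambda_{SR_n}}$, each integrated over $\mathcal R$ against $f_X(x)f_Y(y)$. In each block I would integrate over $Y$ first (the exponent is affine in $Y$, the lower limit being the branch $Y=Y_{\min}(X)$ of $\{L=U\}$), which leaves a single $X$-integral whose integrand, after a shift $x\mapsto x+\text{const}$, has the shape $\dfrac{x^{p}e^{-ax-b/x}}{x+B_4}$. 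Expanding $\dfrac{1}{x+B_4}=\sum_{v\ge1}(-1)^{v-1}x^{v-1}B_4^{-v}$ and applying $\int_0^\infty x^{n}e^{-ax-b/x}\,dx=2(b/a)^{(n+1)/2}K_{n+1}(2\sqrt{ab})$ term by term produces the series $\sum_{v\ge1}(-1)^{v}B_4^{-v}(\cdot)^{v/2}K_v(2\sqrt{\cdot})$, and from the $x^{-1}$ pieces the stand-alone $2K_0\!\big(2\sqrt{-A_1^{R_n}(A_3^{R_n}+\Delta_8^{R_n})}\big)$; the one residual integral that is not of this type I would map onto $[-1,1]$ and approximate by $\int_{-1}^{1}g(\vartheta)(1-\vartheta^2)^{-1/2}\,d\vartheta\approx\frac{\pi}{N}\sum_{k=1}^{N}g(\vartheta_k)$ with $\vartheta_k=\cos\!\big(\tfrac{(2k-1)\pi}{2N}\big)$, yielding the $\frac{\pi}{N}\sum_k\sqrt{1-\vartheta_k^2}(\cdots)$ sums. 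Renaming the composite constants $A_1^i,\dots,A_4^i$, $B_1,\dots,B_5$, $\Delta_5,\dots,\Delta_8$ then gives (15). For the ideal case one sets $\kappa_{SR_n}^2=\sigma_{e_{\cdot}}^2=\varepsilon=0$, so $M_{R_n}=C_{R_n}=\psi_{R_n}=0$: the curve $\{L=U\}$ collapses to $XY=\text{const}$, the inner integral becomes elementary, and the outer one is of the form $\int_0^\infty\frac{e^{-ax}}{x+c}\,dx=-e^{ac}\mathrm{Ei}(-ac)$ for the "full'' part plus a leftover $\int\frac{1}{x}e^{-ax-b/x}\,dx=2K_0(2\sqrt{ab})$ again handled by the Gauss--Chebyshev rule, producing (19) with $\Delta_9,\Delta_{10},\Delta_{11},\Delta_{12}$.

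The hard part is the middle step: finding the branch $Y_{\min}(X)$ of $\{L=U\}$ in closed form, choosing the order of integration and the shift so that the inner integral collapses cleanly and the outer one lands in the Bessel-$K$ family, and justifying the term-by-term integration of the $1/(x+B_4)$ expansion — all while carrying the positivity conditions $\Delta_5^{R_n}>0$, $a_2-Q_{R_n}\gamma_{th2}^{R_n}>0$, $a_1-O_{R_n}\gamma_{th1}^{R_n}>0$ that separate the regime where the integrals converge from the regime where $P_{out}^{T}\equiv1$. Everything else — bookkeeping the dozen-odd composite constants and verifying the ideal-case limit — is routine.
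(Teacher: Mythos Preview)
Your proposal is correct and follows essentially the same route as the paper's Appendix~B: merge the two SIC events via $\varsigma_{R_n}$, integrate $|\hat h_{SR_n}|^2$ first to split the success probability into $I_{21}-I_{22}$, integrate the $|\hat h_{TR_n}|^2$ variable, shift the remaining $|\hat h_{ST}|^2$ integral, and reduce to integrals of the shape $\int_0^\infty (u+c)^{-1}e^{-au-b/u}\,du$ resolved by the $K_v$ series and Gauss--Chebyshev. Two small departures from the paper are worth noting: (i) the paper takes the $(X,Y)$-region to be $\{U>0\}$ (its $y$-lower limit comes from positivity of the upper bound, not from your sharper $\{L=U\}$ curve), and (ii) for the $I_{22}$ block the paper does \emph{not} reuse the series expansion but instead splits the $u$-integral at $u=A_4^{R_n}$, approximates $(u+A_4^{R_n})^{-1}\approx u^{-1}$ on the tail to generate the stand-alone $2K_0$ term, and handles both finite pieces $[0,A_4^{R_n}]$ by Gauss--Chebyshev --- this is where the $\tfrac{1}{\vartheta_k+3}-\tfrac{1}{\vartheta_k+1}$ combination in (15) comes from. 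For the ideal case the paper passes directly to the product variable $XY$ (whose density is the $K_0$ kernel) and then uses \cite[Eq.~(6.611)]{32} to get the $\mathrm{Ei}$ terms, rather than keeping $X$ and $Y$ separate as you sketch; both routes lead to (19).
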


\begin{proof}
See Appendix B.
\end{proof}

\begin{corollary}
At high SNRs, the asymptotic expressions for the OP of T of the ambient backscatter NOMA systems can be expressed as

$\bullet$ Non-ideal conditions

For ideal conditions, the asymptotic expression for the OP of the BD in (20) is provided at the top of next page.
\begin{figure*}[!t]\label{18}
\normalsize
\begin{align}\nonumber
\label{15}
P_{out,\infty }^{T,ni} =&  \!-\! \frac{{{\lambda _{S{R_n}}}{\xi_{{R_n}}}{\gamma _{thc}^{{R_n}}}}}{{{\lambda _{ST}}{\lambda _{T{D_n}}}{\Delta _5}}}{e^{A_2^{{R_n}}}}\!\!\left( \! {\frac{\pi }{N}\sum\limits_{k = 0}^N {{e^{ \!-\! \left(\! {\frac{{2A_3^{{R_n}}}}{{A_4^{{R_n}}\left(\! {{\vartheta _k} + 1} \!\right)}} - \frac{{A_1^{{R_n}}A_4^{{R_n}}\left( {{\vartheta _k} + 1} \right)}}{2}} \!\right)}}\sqrt {1 \!-\! \vartheta _k^2} \left(\! {\frac{1}{{{\vartheta _k} \!+\! 3}} + \frac{1}{{{\vartheta _k} \!+\! 1}}} \!\right) \!-\! 2{K_0}\left( \!{2\sqrt { - A_1^{{R_n}}A_3^{{R_n}}} } \right)} } \right) \\
&+ \frac{{2{\lambda _{S{R_n}}}}}{{{\lambda _{T{R_n}}}{\lambda _{ST}}{\varsigma _{{R_n}}}{B_{{R_n}}}}}{e^{ - {B_5}}}\sum\limits_{v = 1}^\infty  {{{\left( { - 1} \right)}^v}\frac{1}{{B_4^v}}} {\left( {\frac{{{B_3}}}{{{B_1}}}} \right)^{\frac{v}{2}}}{K_v}\left( {2\sqrt {{B_3}{B_1}} } \right)  .
\end{align}
\hrulefill \vspace*{0pt}
\end{figure*}

$\bullet$ Ideal conditions
\begin{equation}
P_{out,\infty }^{T,id} = 1 + {\Delta _9}{e^{{\Delta _9}}}{\rm{Ei}}\left( { - {\Delta _9}} \right) - {\Delta _{11}}{e^{{\Delta _{11}}}}{\rm{Ei}}\left( { - {\Delta _{11}}} \right).
\end{equation}

\end{corollary}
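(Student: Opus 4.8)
The plan is to obtain both asymptotic forms by extracting the high-SNR behaviour of the exact expressions of \textbf{Theorem 3} as $\gamma = P_S/N_0 \to \infty$, rather than re-deriving them from scratch. The first step is to record the order in $\gamma$ of every auxiliary quantity appearing in the exact non-ideal formula (15) and the exact ideal formula (19). A short inspection shows that $\Delta_9$, $\Delta_{11}$, $A_2^{R_n}$, $B_1$, $B_2$, $B_5$ and $\varsigma_{R_n}$ are $O(1)$; that the pieces $\varsigma_{R_n}/(\lambda_{SR_n}\gamma)$, $\lambda_{TR_n}\varsigma_{R_n}B_{R_n}\gamma_{thc}/(\lambda_{SR_n}\lambda_{TR_n}\gamma\Delta_5)$, $1/(\lambda_{SR_n}\gamma\xi_{R_n})$, $\Delta_{10}$ and $\Delta_{12}$ are $O(\gamma^{-1})$; and that $B_3$, $B_4$ grow linearly and $A_3^{R_n}$ quadratically in $\gamma$, while $A_1^{R_n}=O(\gamma^{-2})$, so that $A_1^{R_n}(A_3^{R_n}+\Delta_8^{R_n})$ tends to a finite constant whereas $(B_3+\Delta_6)B_1$ still diverges.

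The second step is to discard the asymptotically negligible pieces term by term, using three elementary facts. First, $e^{-c/\gamma}\to 1$, so $e^{\Delta_9-\varsigma_{R_n}/(\lambda_{SR_n}\gamma)}$, $e^{\Delta_{11}+1/(\lambda_{SR_n}\gamma\xi_{R_n})}$ and the analogous factors in (15) may be replaced by $e^{\Delta_9}$, $e^{\Delta_{11}}$ and their counterparts, and $e^{A_1^{R_n}A_4^{R_n}(\vartheta_k+1)/2}$ by $1$. Second, $\mathrm{Ei}(-x)$ and $K_0(x)$ both diverge only logarithmically as $x\to 0^+$, so any term carrying an explicit $\gamma^{-1}$ prefactor vanishes; in particular the two Gauss--Chebyshev sums in (19), whose common prefactor is $\gamma_{thc}^{R_n}\pi/(N\lambda_{TR_n}\lambda_{ST}\gamma\Delta_5^{R_n})$, disappear, which (together with the first fact) reduces (19) to (21). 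Third, inside the surviving $K_v$, $K_0$ and exponential factors the additive corrections $\Delta_6$ and $\Delta_8^{R_n}$ are negligible next to the diverging $B_3$ and $A_3^{R_n}$ and may be dropped; carrying this out in (15) yields (20).

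The step I expect to be the main obstacle is the bookkeeping for the non-ideal case: the compound constants $B_1,\dots,B_5$ and $A_1^{R_n},\dots,A_4^{R_n}$ are built from the nested quantities $\Delta_5^{R_n},\dots,\Delta_8^{R_n}$, $\varsigma_{R_n}$, $\psi_{R_n}$, $C_{R_n}$, $M_{R_n}$, $B_{R_n}$, each with its own scaling in $\gamma$, so one must expand every one of them in powers of $\gamma$ and check carefully which pieces survive, which vanish, and which combine --- in particular confirming that $(B_3+\Delta_6)B_1$ diverges (so that the $K_v$ sum is retained with $\gamma$ still present, exactly as in (20)) while $A_1^{R_n}(A_3^{R_n}+\Delta_8^{R_n})$ converges, and tracking all the signs so that the limiting expression remains a valid probability. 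Once these orders are pinned down, the limits themselves are routine and the stated forms (20) and (21) follow.
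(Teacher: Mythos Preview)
Your approach is correct and coincides with the paper's own (implicit) derivation: the paper offers no separate proof of this corollary, and the asymptotic forms (20) and (21) are obtained exactly as you describe, by letting $\gamma\to\infty$ in the exact expressions (15) and (19) of \textbf{Theorem 3}, dropping the $O(\gamma^{-1})$ exponents and the Gauss--Chebyshev sums with $\gamma^{-1}$ prefactors, and replacing $B_3+\Delta_6$, $A_3^{R_n}+\Delta_8^{R_n}$ by $B_3$, $A_3^{R_n}$. Your order bookkeeping for $B_1,\dots,B_5$, $A_1^{R_n},\dots,A_4^{R_n}$, $\Delta_6$, $\Delta_8^{R_n}$ is accurate, and the three limiting facts you invoke are precisely what is needed.
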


Next, in order to obtain more insights, the diversity orders for $R_f$, $R_n$ and $T$ are investigated, which can be defined as \cite{30}:
\begin{equation}\label{18}
d =  - \mathop {\lim }\limits_{\gamma  \to \infty } \frac{{\log \left( {P_{out}^\infty } \right)}}{{\log {\gamma}}}.
\end{equation}

\begin{corollary}
The diversity orders of $D_f$, $D_n$ and $BD$ are given as:
\begin{equation}\label{21}
{d_{{R_f}}} = {d_{{R_n}}} =d_{T}^{id} = d_{T}^{ni}= 0.
\end{equation}

\end{corollary}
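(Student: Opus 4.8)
The plan is to establish $\eqref{21}$ by computing the limit in $\eqref{18}$ directly from the asymptotic (high-SNR) expressions for the outage probabilities that are already in hand, namely $\eqref{10}$, $\eqref{16}$, $\eqref{15}$ (the non-ideal asymptotic form at the top of its page) and the ideal form in the displayed equation right after it. The key observation is that each of these asymptotic expressions tends to a strictly positive constant that does \emph{not} depend on $\gamma$. Once one shows $\lim_{\gamma\to\infty}P_{out}^\infty$ is a nonzero constant $c_i\in(0,1)$ for each node, the ratio $\log(P_{out}^\infty)/\log\gamma\to 0$, whence $d=0$ for all three receivers.

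Concretely, I would proceed node by node. First, for $R_f$: in $\eqref{10}$ the quantities $\Delta_1^{R_f},\Delta_2^{R_f},\Delta_3^{R_f}$ are defined in terms of $\lambda$'s, the fixed hardware/CEE constants $B_i,C_i,M_i,\psi_i$, the power-allocation coefficients and the target SINR $\gamma_{th2}^{R_f}$ --- crucially, after letting $\gamma\to\infty$ inside $\eqref{7}$ the $\gamma$-dependent pieces (the terms $Q_i\gamma_{th2}^i$ inside the denominators and the $1/(\lambda_{SR_f}\gamma(\cdots))$ in the exponent) either saturate or vanish, so $\eqref{10}$ is exactly that constant. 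One then notes $\Delta_2^{R_f}>0$, $\mathrm{Ei}(-\Delta_1^{R_f})<0$, so $1+\Delta_2^{R_f}e^{\Delta_1^{R_f}-\Delta_3^{R_f}}\mathrm{Ei}(-\Delta_1^{R_f})$ is a constant strictly between $0$ and $1$; hence $d_{R_f}=0$. The same two-line argument applies verbatim to $\eqref{16}$ for $R_n$. For $T$, I would invoke the ideal asymptotic expression (the $P_{out,\infty}^{T,id}$ display) and the non-ideal one $\eqref{15}$: both are finite linear combinations of $\gamma$-independent terms --- constants, $\mathrm{Ei}$ of constants, and a convergent series / Gauss--Chebyshev sum of $K_0$ and $K_v$ evaluated at $\gamma$-independent arguments --- so again the limit is a positive constant, giving $d_T^{id}=d_T^{ni}=0$.

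The one point that needs a genuine (if short) argument, rather than mere inspection, is showing that this limiting constant is \emph{strictly positive} --- i.e.\ that the asymptotic OP does not secretly collapse to zero through cancellation. This is the ``error floor'' phenomenon the paper advertises, and it is the main obstacle: I would justify it by observing that the residual interference terms (those proportional to $\kappa_{Si}^2$, $\varpi\varphi_J$, $\varphi_J$, $\varepsilon$, and the CEE variances $\sigma_{e_{AB}}^2$, together with the reflection coefficient $\beta^2$ entering $B_i$) remain in the SINR denominators even as $\gamma\to\infty$, so each limiting SINR $\gamma_i^{x_k}$ is bounded above by a finite random variable; consequently $\Pr(\gamma_i^{x_k}>\gamma_{th}^i)$ stays bounded away from $1$ whenever the fixed target exceeds the almost-sure cap, and more generally the probability of the intersection event defining the outage is bounded away from $1$, so the OP floor $c_i>0$. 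Equivalently, one checks $\Delta_1^i,\Delta_3^i,\Delta_4^i$ etc.\ are finite and positive under the stated parameter ranges ($0\le\varpi\le1$, $\varphi_J\in(0,1]$, $a_1<a_2$), which forces the $\mathrm{Ei}$/Bessel expressions to contribute a nonzero constant.

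Finally I would assemble the three results: since for each $X\in\{R_f,R_n,T\}$ we have $\lim_{\gamma\to\infty}P_{out}^{X,\infty}=c_X$ with $0<c_X<1$, the numerator $\log(P_{out}^{X,\infty})\to\log c_X$ is a finite constant while $\log\gamma\to\infty$, so
\begin{equation}\nonumber
d_X=-\lim_{\gamma\to\infty}\frac{\log(P_{out}^{X,\infty})}{\log\gamma}=-\lim_{\gamma\to\infty}\frac{\log c_X}{\log\gamma}=0,
\end{equation}
which is exactly $\eqref{21}$. I expect the write-up to be brief: the real content is the qualitative remark that RHIs, CEEs and the reflection coefficient create an irreducible interference floor, and everything else is the observation that a limit of a constant over $\log\gamma$ vanishes.
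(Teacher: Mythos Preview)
Your proposal is correct and matches the paper's approach: the paper does not give an explicit proof of this corollary, relying instead on the observation in Remark~1 that the asymptotic outage probabilities in Corollaries~1--3 are fixed constants in the high-SNR regime, whence the limit in the diversity-order definition vanishes. Your write-up supplies exactly this argument with added care about strict positivity of the error floor, which the paper only asserts informally.
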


\begin{remark}
From \textbf{Corollary 1}-\textbf{Corollary 4}, we can obtain that: 1) RHIs, CEEs and ipSIC have detrimental effects on the reliability of the considered systems; 2) The asymptotic outage performance of the $R_f$, $R_n$ and $T$ become a constant when the transmit SNR is in a high state, indicating that there are error floors for the OP; 3) From Eq. (23), it can be observed that the diversity orders of the considered system are zero due to the fixed constant for the OP in the high SNR regime.
\end{remark}

\subsection{IP Analysis}

User $i$ ($i \in \{ {{R_f},{R_n},T} \}$) will be intercepted if $E$ can successfully wiretap $j$'s signal, i.e., ${\gamma _E^{p} > {\gamma _{thj}^E}}$, $p \in {\rm{ }}\left\{ {{x_2},{x_1},c\left( t \right)} \right\}$, $j \in \left( {2,1,c} \right)$. Thus, the IP of $i$ by $E$ can be expressed as
\begin{equation}
\label{11}
P_{int}^{i} = {{\rm{P}}_r}\left( {\gamma _E^{p} > \gamma _{thj}^E} \right),
\end{equation}
where ${\gamma _{thj}^E}$ is the secrecy SNR threshold of $i$.
\begin{theorem}
The analytical expressions for the IP of the far reader, the near reader and T can be respectively obtained as

For the far and near readers, we have
\begin{equation}
\label{11}
P_{int}^{{R_f}} \!\!=\!  -\! {\Delta _2^{{E}}}{e^{{\Delta _1^{{E}}} \!-\! {\Delta _3^{{E}}} \!-\! \frac{{{\gamma _{th2}^{{E}}}}}{{{\lambda _{S{E}}}\gamma \left( {{a_2} \!-\! {Q_{{E}}}{\gamma _{th2}^{{E}}}} \right)}}}}{\rm{Ei}}\left( { \!-\! {\Delta _1^{{E}}}} \right),
\end{equation}
\begin{equation}
\label{7}
P_{{\mathop{\rm int}} }^{{R_n}} = - {\Delta _{16}}{e^{{\Delta _{17}} - {\Delta _{18}} - \frac{{\gamma _{th2}^E}}{{{\lambda _{SE}}\gamma \left( {{a_1} - {O_E}\gamma _{th2}^E} \right)}}}}{\rm{Ei}}\left( { - {\Delta _{17}}} \right),
\end{equation}

\noindent where ${\Delta _{17}} \!\!=\!\!\! \left(\! {\frac{{{M_E}{\gamma _{th2}}}}{{{\lambda _{SE}}\!\left( \!{{a_1} \!- \!{O_E}\!\gamma _{th2}^E} \!\right)}} \!\!+\! \! \frac{1}{{{\lambda _{ST}}}}} \!\right)\!\!\frac{{{\lambda _{SE}}\!\left(\! {{a_1} \!- {O_E}\!\gamma _{th2}^E} \!\right)\! +\! {\lambda _{TE}}\!{C_E}\!\gamma _{th2}^E}}{{{\lambda _{TE}}\!{B_E}\!\gamma _{th2}^E}}$, ${\Delta _{16}} = \frac{{{\lambda _{SE}}\left( {{a_1} - {O_E}\gamma _{th2}^E} \right)}}{{{\lambda _{ST}}{\lambda _{TE}}{B_E}\gamma _{th2}^E}}$, and ${\Delta _{18}} = \frac{{{\psi _E}\gamma _{th2}^E}}{{{\lambda _{SE}}\left( {{a_1} - {O_E}\gamma _{th2}^E} \right)}} $.

For T, we have

$\bullet$ Non-ideal conditions

For non-ideal conditions, the analytical expression for the IP of T in (27) is at the top of next page.
\begin{figure*}[!t]\label{18}
\normalsize
\begin{align}\nonumber
\label{15}
P_{int}^{T,ni} =&  -\! \frac{{{\lambda _{SE}}{\xi_E}\gamma _{thc}^E}}{{{\lambda _{ST}}{\lambda _{TE}}\Delta _5^E}}{e^{A_2^E}}\!\left(\! {\frac{\pi }{N}\sum\limits_{k = 0}^N {{e^{ \!- \!\left( {\frac{{2\left( \!{A_3^E \!+\! \Delta _8^E} \!\right)}}{{A_4^E\left( {{\vartheta _k} + 1} \!\right)}} \!-\! \frac{{A_1^EA_4^E\left( {{\vartheta _k} + 1} \right)}}{2}} \!\right)}}\!\!\sqrt {1\! -\! \vartheta _k^2} \!\left(\! {\frac{1}{{{\vartheta _k} \!+\! 3}} \!- \!\frac{1}{{{\vartheta _k} \!+\! 1}}} \!\right)\! \!+\! 2{K_0}\!\left( {2\sqrt { - A_1^E\left( {A_3^E + \Delta _8^E} \right)} } \right)} } \right) \\
&+ 2\sqrt {{{\Delta _{15}}}{{{\Delta _{13}}}}} {e^{ - {\Delta _{14}}}}{K_1}\left( {2\sqrt {{\Delta _{13}}{\Delta _{15}}} } \right).
\end{align}
\hrulefill \vspace*{0pt}
\end{figure*}

\begin{table*}[!htb]
  \begin{center}
  \caption{Table of Parameters for numerical results.}
    \begin{tabular}{|l|l|}
      \hline
      Power sharing coefficients of NOMA & ${a_1} = 0.2$, ${a_2} = 0.8$ \\
      \hline
      Noise power & ${N_0} = 1$ \\
      \hline
      Reflection coefficient &  $\beta  = 0.1$ \\
      \hline
      ipSIC parameter & $\varepsilon  = 0.01$ \\
      \hline
      Power coefficient of artificial noise & $\varphi _J = 0.1$ \\
      \hline
      Interfering factor of readers & $\varpi  = 0.5$\\
      \hline
      RHIs parameter &  ${\kappa _{S{R_f}}} = {\kappa _{S{R_n}}} = {\kappa _{SE}} = \kappa=0.1 $ \\
      \hline
      Channel fading parameters & $\left\{ {{\lambda _{S{R_f}}},{\lambda _{S{R_n}}},{\lambda _{SB}},{\lambda _{SE}},{\lambda _{T{R_f}}},{\lambda _{T{R_n}}}{\lambda _{TE}}} \right\} = \left\{ {4,6,1,0.5,1,2,0.3} \right\}$ \\
      \hline
      CEEs parameter &  $\sigma _{{e_{S{R_f}}}}^2 = \sigma _{{e_{S{R_n}}}}^2 = \sigma _{{e_{SB}}}^2 = \sigma _{{e_{SE}}}^2 = \sigma _{{e_{T{R_f}}}}^2 = \sigma _{{e_{T{R_n}}}}^2 = \sigma _{{e_{TE}}}^2 = \sigma _e^2=0.05 $\\
      \hline
      Targeted data rates (OP) & $\left\{ {\gamma _{th1}^{{R_n}},\gamma _{th2}^{{R_n}} = \gamma _{th2}^{{R_f}},\gamma _{thc}^{{R_n}}} \right\} = \left\{ {1.2,1,0.001} \right\}$ \\
      \hline
      Targeted data rates (IP) & $\left\{ {\gamma _{th1}^E,\gamma _{th2}^E,\gamma _{thc}^E} \right\} = \left\{ {0.12,0.3,0.01} \right\}$ \\

      \hline
    \end{tabular}

  \end{center}
\end{table*}

\noindent In (27), ${\Delta _{13}} = 1/\left( {{\lambda _{ST}}{\lambda _{TE}}\Delta _5^E} \right)$, ${\Delta _{14}} = \frac{{{C_E}\gamma _{thc}^E\left( {{\lambda _{ST}} + {\lambda _{TE}}} \right)}}{{{\lambda _{TE}}{\lambda _{ST}}\Delta _5^E}}$, and ${\Delta _{15}} = \frac{{C_E^2{{\left( {\gamma _{thc}^E} \right)}^2}/\left( {{\lambda _{TE}}\Delta _5^E} \right)}}{{\Delta _5^E}} + \left( {{\psi _E} + \frac{1}{\gamma } } \right)\gamma _{thc}^E$.

$\bullet$ Ideal conditions

For ideal conditions, the analytical expression for the IP of T in (28) is at the top of next page.
\begin{figure*}[!t]\label{18}
\normalsize
\begin{align}\nonumber
\label{15}
P_{{\mathop{\rm int}} }^{T,id} =& 1 - \frac{{\pi \gamma _{thc}^E}}{{N{\lambda _{ST}}{\lambda _{TE}}\gamma \Delta _5^E}}\sum\limits_{k = 0}^N  \left( {{\vartheta _k} + 1} \right){K_0}\left( {\left( {{\vartheta _k} + 1} \right)\sqrt {\frac{{\gamma _{thc}^E}}{{{\lambda _{ST}}{\lambda _{TE}}\gamma \Delta _5^E}}} } \right)\sqrt {1 - \vartheta _k^2}  \\
&-\frac{2}{{{\lambda _{ST}}{\lambda _{TE}}}}{e^{\frac{1}{{{\lambda _{SE}}{\xi_E}\gamma }}}}\int_{\frac{{\gamma _{thc}^E}}{{\Delta _5^E}}}^\infty  {{e^{ - \frac{{\Delta _5^Ey}}{{{\lambda _{SE}}{\xi_E}\gamma _{thc}^E}}}}{K_0}\left( {2\sqrt {\frac{y}{{{\lambda _{ST}}{\lambda _{TE}}}}} } \right)dy} .
\end{align}
\hrulefill \vspace*{0pt}
\end{figure*}
\end{theorem}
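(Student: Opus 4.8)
The plan is to evaluate the single-constraint tail probabilities in the IP definition~(24) by substituting the eavesdropper SINRs~(3)--(5) (with $i=E$) and then integrating over the three independent exponential variables $|\hat h_{ST}|^2$, $|\hat h_{TE}|^2$ and $|\hat h_{SE}|^2$, in exactly the same manner as the OP proofs of Appendices~A and~B. The key structural observation is that, unlike an outage, each intercept event is a \emph{single} decoding-success event at $E$, so no minimum/joint region over several thresholds is required; this is precisely what removes the leading $1$ and flips the overall sign relative to the corresponding OP expressions.

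For $R_f$ the event is $\gamma_E^{x_2}>\gamma_{th2}^E$, and its complement $\Pr(\gamma_E^{x_2}\le\gamma_{th2}^E)$ is structurally identical to $P_{out}^{R_f}$ in~(6)--(7) after relabelling $R_f\to E$. Hence I would simply invoke \textbf{Theorem 1} with that substitution and write $P_{int}^{R_f}=1-[\,1+\Delta_2^{E}e^{\cdots}\mathrm{Ei}(-\Delta_1^{E})\,]$, which collapses to~(25). For $R_n$ the intercept event is the single constraint on the $x_1$-SINR~(4); this is \emph{not} the complement of the joint $R_n$ outage of \textbf{Theorem 2}, but its derivation is a verbatim copy of the Appendix~A calculation with $a_2\to a_1$ and $Q_E\to O_E$, so that $\Delta_{16},\Delta_{17},\Delta_{18}$ are the direct analogues of $\Delta_2^E,\Delta_1^E,\Delta_3^E$. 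In both reader cases, after solving the inequality for $|\hat h_{SE}|^2$ and applying the exponential tail $\Pr(|\hat h_{SE}|^2>t)=e^{-t/\lambda_{SE}}$, the expression reduces to an exponential prefactor times a single integral over $|\hat h_{ST}|^2$ that evaluates to the exponential-integral form $\mathrm{Ei}(\cdot)$ via~\cite{31}, yielding~(25)--(26).

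The genuinely hard case is the tag, $P_{int}^{T}=\Pr(\gamma_E^{c(t)}>\gamma_{thc}^E)$ with the $c(t)$-SINR~(5). Here the cross product $|\hat h_{TE}|^2|\hat h_{ST}|^2$ appears in both the numerator and the denominator, so the inequality does not separate into independent factors. I would first clear denominators and collect the coefficient $\Delta_5^{E}=\beta^2-m_E\gamma_{thc}^E$ of the cross term, which produces a linear upper bound on $|\hat h_{SE}|^2$; integrating that variable out gives a CDF $1-e^{-(\cdots)/\lambda_{SE}}$ that is active only on the region where the bound is positive (equivalently, where $|\hat h_{TE}|^2|\hat h_{ST}|^2$ exceeds a threshold scaling like $\gamma_{thc}^E/\Delta_5^E$). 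The remaining double integral over $|\hat h_{TE}|^2$ and $|\hat h_{ST}|^2$ is of the double-exponential type $\int_0^\infty x^{\nu-1}e^{-ax-b/x}\,dx=2(b/a)^{\nu/2}K_\nu(2\sqrt{ab})$, and the cases $\nu=0,1$ generate precisely the $K_0$ and $K_1$ terms in~(27); the part of the integrand that resists a closed form is approximated by Gauss--Chebyshev quadrature, which supplies the nodes $\vartheta_k=\cos[(2k-1)\pi/(2N)]$ and the weight $\pi/N$. The main obstacle will be the bookkeeping: correctly tracking the positivity region of the $|\hat h_{SE}|^2$ bound while carrying out the two coupled integrals, so that the Bessel-plus-quadrature decomposition lines up with the stated constants $A_1^E$--$A_4^E$, $\Delta_8^E$, and $\Delta_{13}$--$\Delta_{15}$.

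Finally, the ideal-condition expression~(28) follows by specialising this tag calculation to $\kappa_{SE}=0$ and $\sigma_{e}^2=0$, so that $M_E=C_E=\psi_E=0$ (and hence $\Delta_{14}=0$). This removes the $M_E$- and $C_E$-weighted cross terms, collapses the Bessel contributions to a single $K_0$ family, and leaves one residual integral in $y$ together with a surviving Gauss--Chebyshev sum for the still non-elementary $|\hat h_{ST}|^2$ integral, which is exactly the two-line form displayed in~(28).
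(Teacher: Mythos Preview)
Your proposal is correct and follows essentially the same route as the paper's Appendix~C: the reader IPs are obtained by re-running the single-constraint $I_1$ calculation of Appendix~A with the eavesdropper labels (and, for $R_n$, the $a_2\to a_1$, $Q_E\to O_E$ swap), while for the tag you integrate out $|\hat h_{SE}|^2$ first, split the resulting double integral over $(|\hat h_{ST}|^2,|\hat h_{TE}|^2)$ into the ``$1$'' part (yielding the $K_1$ term via the Bessel identity) and the ``$e^{-(\cdots)/\lambda_{SE}}$'' part (handled exactly as $I_{22}$ in Appendix~B, i.e.\ Gauss--Chebyshev plus $K_0$), and finally specialise to $\kappa=\sigma_e^2=0$ for the ideal case. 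This is precisely the paper's argument, and your identification of the positivity region of the $|\hat h_{SE}|^2$ bound as the source of the lower limits matches the displayed limits in~(C.1).
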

\begin{proof}
See Appendix C.
\end{proof}

Moreover, for further investigation of the ambient backscatter NOMA secure communication systems, we also study the asymptotic behaviors of IP in the high MER region \cite{9032127}. MER is introduced to distinguish the channel state of the main link and eavesdropping link, being defined as ${\lambda _{me}} = \frac{{{\lambda _{ST}}}}{{{\lambda _{TE}}}}$.
\begin{corollary}
At high MERs, the asymptotic expression for the IP of $R_f$ of the ambient backscatter NOMA systems is given as
\begin{equation}
\label{10}
P_{int,\infty }^{{R_f}} \!=\!  -\! {\Delta _2}^{\!\prime} {e^{{\Delta _1}^{\!\prime} \! -\! {\Delta _3}^{\!\prime} \! -\! \frac{{\gamma _{th2}^E}}{{{\lambda _{\!SE}}\gamma \left(\! {{a_2} \!-\! {Q_E}\gamma _{\!th2}^E} \!\right)}}}}\!\!\left(\! {1 \!+\! {{b_1}^{\!\prime} }} \!\right)\!{\rm{Ei}}\!\left( \!{ -\! \left(\! {{\Delta _1}^{\!\prime} \! + \!{{b_1}^{\!\prime} }} \right)} \!\right),
\end{equation}
\noindent where ${\Delta _1}^{\!\prime}  \!\!=\!\!\! \frac{{{M_E}}}{{{\lambda _{\!T\!E}}\!{B_{\!E}}}} + \frac{{{M_E}{C_E}\gamma _{th2}^E}}{{{\lambda _{\!SE}}\left( \!{{a_2}\! -\! {Q_E}\!\gamma _{th2}^E} \!\right)\!{B_E}}}$, ${\Delta _2}^{\!\prime} \!=\! \frac{{{\lambda _{\!SE}}\left( \!{{a_2} \!-\! {Q_E}\gamma _{th2}^E} \right)}}{{{\lambda _{me}}\lambda _{TE}^2{B_E}\gamma _{th2}^E}}$, ${\Delta _3}^{\!\prime} \!=\! \frac{{{\psi _E}\gamma _{th2}^E}}{{{\lambda _{\!SE}}\!\left(\! {{a_2}\! -\! {Q_E}\gamma _{th2}^E} \right)}}$, and ${{b_1}^{\!\prime} } \!=\! \frac{{{\lambda _{\!SE}}\!\left( {{a_2} - {Q_E}\gamma _{th2}^E} \right) + {\lambda _{TE}}{C_E}\gamma _{th2}^E}}{{{\lambda _{me}}\lambda _{TE}^2{B_E}\gamma _{th2}^E}}$.
\end{corollary}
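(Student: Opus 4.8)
The plan is to derive the stated asymptotic expression directly from the exact intercept probability of $R_f$ established in \textbf{Theorem~4} by re-parametrising the channel statistics through the MER $\lambda_{me}=\lambda_{ST}/\lambda_{TE}$ and then linearising a single vanishing exponential. First I would substitute $\lambda_{ST}=\lambda_{me}\lambda_{TE}$ into the constants $\Delta_1^{E}$, $\Delta_2^{E}$, $\Delta_3^{E}$ that appear in $P_{int}^{R_f}$. Multiplying out the product that defines $\Delta_1^{E}$ shows it separates into the MER-independent part $\Delta_1'=\frac{M_E}{\lambda_{TE}B_E}+\frac{M_E C_E\gamma_{th2}^E}{\lambda_{SE}(a_2-Q_E\gamma_{th2}^E)B_E}$ plus the single term that carries the $1/\lambda_{ST}$ factor, $b_1'=\frac{\lambda_{SE}(a_2-Q_E\gamma_{th2}^E)+\lambda_{TE}C_E\gamma_{th2}^E}{\lambda_{me}\lambda_{TE}^2 B_E\gamma_{th2}^E}$; that is, $\Delta_1^{E}=\Delta_1'+b_1'$ is an \emph{exact} algebraic identity. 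In the same way $\Delta_2^{E}$ becomes exactly $\Delta_2'$, while $\Delta_3^{E}$ is already MER-free and equals $\Delta_3'$. Consequently $b_1'=\Theta(1/\lambda_{me})$ is the only MER-dependent quantity still sitting inside the exponent and inside the argument of the exponential integral in $P_{int}^{R_f}$.

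Next I would factor $e^{\Delta_1^{E}}=e^{\Delta_1'}e^{b_1'}$, so that the exact intercept probability of $R_f$ reads
\begin{equation*}
P_{int}^{R_f}=-\Delta_2'\,e^{b_1'}\,e^{\Delta_1'-\Delta_3'-\frac{\gamma_{th2}^E}{\lambda_{SE}\gamma(a_2-Q_E\gamma_{th2}^E)}}\,{\rm{Ei}}\!\left(-(\Delta_1'+b_1')\right).
\end{equation*}
In the high-MER regime $b_1'\to 0^{+}$, so the first-order Maclaurin expansion $e^{b_1'}=1+b_1'+o(b_1')$ is legitimate. Retaining terms up to first order in $b_1'$ in the prefactor --- while keeping the argument of ${\rm{Ei}}(\cdot)$ and the surviving exponential exactly, which is precisely what preserves the tightness of the approximation without altering its order --- reproduces the expression claimed in the corollary.

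The step I expect to be the main obstacle is the algebraic bookkeeping of the first paragraph: one has to check by direct computation that regrouping $\Delta_1^{E}$ yields exactly the $\Delta_1'$ and $b_1'$ printed in the statement and that $\Delta_2^{E},\Delta_3^{E}$ collapse to $\Delta_2',\Delta_3'$ once $\lambda_{ST}=\lambda_{me}\lambda_{TE}$ is enforced; after that identity is in hand the asymptotic claim is just the elementary linearisation $e^{b_1'}\approx 1+b_1'$. For consistency with the stated behaviour of the system it is also worth noting that $\Delta_2'=\Theta(1/\lambda_{me})$ and ${\rm{Ei}}(-(\Delta_1'+b_1'))\to{\rm{Ei}}(-\Delta_1')$ while the remaining exponential tends to a constant, so the derived expression gives $P_{int,\infty}^{R_f}=\Theta(1/\lambda_{me})\to 0$; that is, the security of $R_f$ improves as the MER grows, in line with the abstract.
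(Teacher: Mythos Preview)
Your proposal is correct and follows exactly the approach indicated in the paper: the paper's proof sketch is simply ``take $\lambda_{me}$ large in the exact expression $P_{int}^{R_f}$ of Theorem~4 and use $e^{x}\approx 1+x$ as $x\to 0$,'' and you have carried out precisely this program, correctly identifying that after the substitution $\lambda_{ST}=\lambda_{me}\lambda_{TE}$ one has the exact decompositions $\Delta_1^{E}=\Delta_1'+b_1'$, $\Delta_2^{E}=\Delta_2'$, $\Delta_3^{E}=\Delta_3'$, with $b_1'$ the sole $O(1/\lambda_{me})$ piece to be linearised. Your added remark that $\Delta_2'=\Theta(1/\lambda_{me})$ so that $P_{int,\infty}^{R_f}\to 0$ is a useful sanity check the paper leaves implicit.
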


\begin{proof}
The proof follows by taking ${\lambda _{me}}$ large in (29) and simplifying the expressions by utilizing ${e^x} \approx 1 + x$ if $x \to 0$. Similarly, we can also obtain (30).
\end{proof}

\begin{corollary}
At high MERs, the asymptotic expression for the IP of $R_n$ of the ambient backscatter NOMA systems is given as
\begin{align}\nonumber
\label{10}
P_{int,\infty }^{{R_n}} =  - {\Delta _{16}}^\prime {e^{{\Delta _{17}}^\prime  - {\Delta _{18}}^\prime  - \frac{{\gamma _{th2}^E}}{{{\lambda _{SE}}\gamma \left( {{a_1} - {O_E}\gamma _{th2}^E} \right)}}}}\\
\times\left( {1 + {b_2}^\prime } \right){\rm{Ei}}\left( { - \left( {{\Delta _{17}}^\prime  + {b_2}^\prime } \right)} \right),
\end{align}
\noindent where ${\Delta _{17}}^\prime  \!\!=\!\!\! \frac{{{M_E}}}{{{\lambda _{\!T\!E}}\!{B_{\!E}}}} + \frac{{{M_E}{C_E}\gamma _{th2}^E}}{{{\lambda _{\!SE}}\left( \!{{a_1}\! -\! {O_E}\!\gamma _{th2}^E} \!\right)\!{B_E}}}$, ${\Delta _{16}}^\prime \!=\! \frac{{{\lambda _{\!SE}}\left( \!{{a_1} \!-\! {O_E}\gamma _{th2}^E} \right)}}{{{\lambda _{me}}\lambda _{TE}^2{B_E}\gamma _{th2}^E}}$, ${\Delta _{18}}^{\!\prime}\! =\! \frac{{{\psi _E}\gamma _{th2}^E}}{{{\lambda _{SE}}\left( \!{{a_1} \!- {Q_E}\gamma _{th2}^E} \!\right)}}$, and ${b_2}^{\!\prime} \!=\! \frac{{{\lambda _{SE}}\left(\! {{a_1} - {O_E}\gamma _{th2}^E} \!\right) \!+\! {\lambda _{\!TE}}{C_E}\gamma _{th2}^E}}{{{\lambda _{me}}\lambda _{TE}^2{B_E}\gamma _{th2}^E}}$.
\end{corollary}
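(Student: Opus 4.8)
The plan is to obtain \textbf{Corollary 6} directly from the exact intercept probability of the near reader in \textbf{Theorem 4}, following the same route that produced \textbf{Corollary 5} from the corresponding $R_f$ result. The high-MER regime differs from the general case only through the size of $\lambda_{ST}$ relative to $\lambda_{TE}$, so the first step is to replace $\lambda_{ST}$ by $\lambda_{me}\lambda_{TE}$ (the definition of the MER) inside $P_{int}^{R_n}$ and to track how the constants $\Delta_{16}$, $\Delta_{17}$, $\Delta_{18}$ respond; note that $B_E,C_E,M_E,O_E,\psi_E$ carry no dependence on $\lambda_{ST}$, so these are the only places the MER enters.

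The crux is a short algebraic bookkeeping step. After the substitution, $\Delta_{16}$ becomes identically $\Delta_{16}^{\prime}$ and $\Delta_{18}$ is untouched (it carries no $\lambda_{ST}$), so $\Delta_{18}=\Delta_{18}^{\prime}$; only $\Delta_{17}$ changes, and it changes \emph{additively}. Expanding the product in its definition,
\[
\Delta_{17}=\underbrace{\frac{M_E\gamma_{th2}^{E}}{\lambda_{SE}(a_1-O_E\gamma_{th2}^{E})}\cdot\frac{\lambda_{SE}(a_1-O_E\gamma_{th2}^{E})+\lambda_{TE}C_E\gamma_{th2}^{E}}{\lambda_{TE}B_E\gamma_{th2}^{E}}}_{=\;\Delta_{17}^{\prime}}+\underbrace{\frac{1}{\lambda_{ST}}\cdot\frac{\lambda_{SE}(a_1-O_E\gamma_{th2}^{E})+\lambda_{TE}C_E\gamma_{th2}^{E}}{\lambda_{TE}B_E\gamma_{th2}^{E}}}_{=\;b_2^{\prime}},
\]
so that $\Delta_{17}=\Delta_{17}^{\prime}+b_2^{\prime}$ with $b_2^{\prime}=O(1/\lambda_{me})\to0$ as $\lambda_{me}\to\infty$.

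Substituting these three identities into the $R_n$ intercept probability gives
\[
P_{int}^{R_n}=-\Delta_{16}^{\prime}\,e^{\,b_2^{\prime}}\,e^{\,\Delta_{17}^{\prime}-\Delta_{18}^{\prime}-\frac{\gamma_{th2}^{E}}{\lambda_{SE}\gamma(a_1-O_E\gamma_{th2}^{E})}}\,{\rm Ei}\!\left(-(\Delta_{17}^{\prime}+b_2^{\prime})\right),
\]
and I would then let $\lambda_{me}\to\infty$. Since $b_2^{\prime}\to0$, the isolated factor $e^{b_2^{\prime}}$ is linearised via $e^{x}\approx1+x$ for $x\to0$ — the same approximation invoked in the proof of \textbf{Corollary 5} — which turns $e^{b_2^{\prime}}$ into $1+b_2^{\prime}$ and reproduces the stated expression, the residual $b_2^{\prime}$ being kept both as this prefactor and as the argument of ${\rm Ei}(\cdot)$. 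Because $\Delta_{17}^{\prime}$ is a fixed positive constant, ${\rm Ei}(-(\Delta_{17}^{\prime}+b_2^{\prime}))$ remains bounded, so the first-order replacement is legitimate; note also that $\Delta_{16}^{\prime}=O(1/\lambda_{me})$, so $P_{int,\infty}^{R_n}\to0$, consistent with the claim that large MER improves the security of $R_n$.

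Since the computation is essentially mechanical, the only point requiring genuine care — the main obstacle — is exactly the bookkeeping above: confirming that $\Delta_{16}$ and $\Delta_{18}$ carry over verbatim, correctly peeling the $1/\lambda_{ST}$ contribution out of $\Delta_{17}$ as the vanishing term $b_2^{\prime}$, and being consistent about where that $O(1/\lambda_{me})$ correction is retained. \textbf{Corollary 5} serves as the template, with $a_2$ and $Q_E$ there replaced by $a_1$ and $O_E$ in the relevant slots here.
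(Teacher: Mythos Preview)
Your proposal is correct and follows essentially the same approach as the paper: start from the exact $P_{int}^{R_n}$ in \textbf{Theorem~4}, substitute $\lambda_{ST}=\lambda_{me}\lambda_{TE}$, split $\Delta_{17}$ into its $\lambda_{ST}$–independent part $\Delta_{17}^{\prime}$ and the vanishing remainder $b_2^{\prime}$, and then linearise $e^{b_2^{\prime}}\approx 1+b_2^{\prime}$ exactly as in the proof of \textbf{Corollary~5}. Your algebraic bookkeeping on $\Delta_{16}^{\prime}$, $\Delta_{17}^{\prime}$, $\Delta_{18}^{\prime}$ and $b_2^{\prime}$ is accurate (the $Q_E$ appearing in the paper's $\Delta_{18}^{\prime}$ is evidently a typographical slip for $O_E$, consistent with your derivation).
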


\begin{corollary}
At high MERs, the asymptotic expression for the IP of $T$ of the ambient backscatter NOMA systems can be written by

$\bullet$ Non-ideal conditions

For non-ideal conditions, the asymptotic expression for the OP of the BD in (31) is provided at the top of next page.
\begin{figure*}[!t]\label{18}
\normalsize
\begin{align}\nonumber
\label{15}
&P_{int,\infty }^{T,ni} =  - \frac{{\pi {\lambda _{SE}}{\xi _E}\gamma _{thc}^E}}{{N{\lambda _{me}}\lambda _{TE}^2\Delta _5^E}}{e^{ - \frac{{{M_E}\gamma _{thc}^E}}{{{\lambda _{TE}}\Delta _5^i}}}}\left(\! {1 \!-\! \frac{{{C_E}\gamma _{thc}^E}}{{{\lambda _{me}}{\lambda _{TE}}\Delta _5^E}}} \!\right)\!\sum\limits_{k = 0}^N {{e^{ - \frac{{2\left( {A_3^E + \Delta _8^E} \right)}}{{A_4^E\left( {{\vartheta _k} + 1} \right)}}}}\!\left(\! {1 \!+\! \frac{{{A_1}^\prime A_4^E\left( {{\vartheta _k} + 1} \right)}}{2}} \!\right)\!\sqrt {1 \!-\! \vartheta _k^2} \left( {\frac{1}{{{\vartheta _k} + 3}} \!- \!\frac{1}{{{\vartheta _k} + 1}}} \right)}  \\
 & +\! \frac{{{\lambda _{\!SE}}\!{\xi _E}\gamma _{\!thc}^E}}{{{\lambda _{me}}\lambda _{TE}^2\Delta _5^E}}{e^{ \!- \frac{{{M_E}\gamma _{thc}^E}}{{{\lambda _{TE}}\Delta _5^i}}}}\!\!\left(\! {1\! - \!\frac{{{C_E}\gamma _{thc}^E}}{{{\lambda _{me}}{\lambda _{\!TE}}\Delta _5^{\!E}}}} \!\right)\!{\rm{ln}}\!\left(\!\! {\sqrt { \!-\! A_1^E\!\left(\! {A_3^E \!+\! \Delta _8^E} \!\right)} } \!\right)
\!\!+\! 2\sqrt {{\Delta _{15}}{\Delta _{13}}^{\!\prime} } {K_1}\!\left(\! {2\sqrt {{\Delta _{13}}^{\!\prime} \!{\Delta _{15}}} } \!\right){e^{\! - \frac{{{C_E}\gamma _{thc}^E}}{{{\lambda _{TE}}\Delta _5^E}}}}\left(\! {1 \!- \!\frac{{{C_E}\gamma _{thc}^E}}{{{\lambda _{me}}{\lambda _{\!TE}}\Delta _5^{\!E}}}} \!\right).
\end{align}
\hrulefill \vspace*{0pt}
\end{figure*}

\noindent In (31), ${A_1}^\prime  \;=  - \; \frac{1}{{{\lambda _{SE}}\;{\xi _E}\;{\lambda _{me}}\;\lambda _{TE}^2\Delta _5^E{\gamma ^2}}}$, ${\Delta _{13}}^\prime  = \frac{1}{{{\lambda _{me}}\;\lambda _{TE}^2\;\Delta _5^E}}$,\\
${K_1}\!\left( \!{2\sqrt {{\Delta _{13}}^\prime \!{\Delta _{15}}} } \!\right) \!\approx \! {I_1}\left(\! {2\sqrt {{\Delta _{13}}^\prime {\Delta _{15}}} } \!\right)\left( \! {{\rm{ln}}\left(\! {\sqrt {{\Delta _{13}}^\prime {\Delta _{15}}} } \right) \!+\! \upsilon } \!\right)+ \\
\frac{1}{2}{\left(\! {\sqrt {{\Delta _{13}}^{\!\prime} \!{\Delta _{15}}} } \!\right)^{ - 1}} \!\!- \frac{1}{2}\!\sum_{l = 0}^3 \!{\frac{{{{\left(\! {\sqrt {{\Delta _{13}}^{\!\prime} {\Delta _{15}}} } \!\right)}^{2l + 1}}}}{{l!\left( {l + 1} \right)}}} \!\!\left( \! {\sum_{k = 1}^l {\frac{1}{k} \!+\! \sum_{k = 1}^{l{\rm{ + 1}}} {\frac{1}{k}} } } \!\right)$.\footnote{For large MER,  in order to achieve a better approximation effect, we only need to consider the first three terms of $l$, i.e. $l=1,2,3$.}

$\bullet$ Ideal conditions

For Ideal conditions, the analytical expression for the IP of T in (32) is provided at the top of next page.
\begin{figure*}[!t]\label{18}
\normalsize
\begin{align}\nonumber
\label{15}
P_{int,\infty }^{T,id} =& 1 + \frac{{\pi \gamma _{thc}^E}}{{N{\lambda _{me}}\lambda _{TE}^2\gamma \Delta _5^E}}\sum\limits_{k = 0}^N  \left( {{\vartheta _k} + 1} \right){\rm{ln}}\left( {\frac{{{\vartheta _k} + 1}}{2}\sqrt {\frac{{\gamma _{thc}^E}}{{{\lambda _{me}}\lambda _{TE}^2\gamma \Delta _5^E}}} } \right)\sqrt {1 - \vartheta _k^2}  \\
&+\frac{2}{{{\lambda _{me}}\lambda _{TE}^2}}{e^{\frac{1}{{{\lambda _{SE}}{\xi_E}\gamma }}}}\int_{\frac{{\gamma _{thc}^E}}{{\Delta _5^E}}}^\infty  {{e^{ - \frac{{\Delta _5^Ey}}{{{\lambda _{SE}}{\xi_E}\gamma _{thc}^E}}}}{\rm{ln}}\left( {\sqrt {\frac{y}{{{\lambda _{me}}\lambda _{TE}^2}}} } \right)dy} .
\end{align}
\hrulefill \vspace*{0pt}
\end{figure*}

\end{corollary}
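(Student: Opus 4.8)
The plan is to start from the exact expressions for $P_{int}^{T,ni}$ in (27) and $P_{int}^{T,id}$ in (28), substitute ${\lambda _{ST}} = {\lambda _{me}}{\lambda _{TE}}$ everywhere, and then let ${\lambda _{me}} \to \infty$. The governing principle is the same asymptotic device already used in the proofs of Corollaries 5 and 6: whenever a quantity of the form $e^{x}$ appears with $x = \mathcal{O}(1/{\lambda _{me}}) \to 0$, replace it by $1 + x$; and whenever a modified Bessel function ${\rm{K}}_0$ or ${\rm{K}}_1$ appears with an argument that shrinks like $1/\sqrt{{\lambda _{me}}}$, replace it by its small-argument series expansion. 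First I would identify, term by term in (27), which of the composite constants carry a factor ${\lambda _{ST}}^{-1} = ({\lambda _{me}}{\lambda _{TE}})^{-1}$: these are $A_1^E$, ${\Delta _{13}}$, and the various ${B}$- and ${\Delta}$-constants that multiply the Gauss--Chebyshev sum. Each such factor is precisely what becomes ${A_1}'$, ${\Delta _{13}}'$, etc.\ in the statement of the corollary.

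The three pieces of (27) are then handled separately. For the Gauss--Chebyshev sum (the term with prefactor $-{\lambda _{SE}}{\xi_E}\gamma _{thc}^E/({\lambda _{ST}}{\lambda _{TE}}\Delta _5^E)$), I would pull the $1/{\lambda _{me}}$ out of the prefactor, expand the outer exponential $e^{A_2^E}$ (which picks up a ${\lambda _{ST}}^{-1}$ through its ${C_E}\gamma_{thc}^E/({\lambda_{ST}}\Delta_5^E)$ piece) to first order, and expand the ${\rm{K}}_0$ term using ${\rm{K}}_0(2\sqrt{z}) \approx -\ln(\sqrt{z}) - \upsilon$ for small $z = -A_1^E(A_3^E+\Delta_8^E) = \mathcal{O}(1/{\lambda _{me}})$; the $2{\rm{K}}_0$ term thus collapses to the $\ln(\sqrt{-A_1^E(A_3^E+\Delta_8^E)})$ term in (31), while the exponential inside the sum expands to $1 + {A_1}'A_4^E({\vartheta_k}+1)/2$. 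For the last term of (27), $2\sqrt{{\Delta_{15}}{\Delta_{13}}}\,e^{-{\Delta_{14}}}{\rm{K}}_1(2\sqrt{{\Delta_{13}}{\Delta_{15}}})$, I note ${\Delta_{13}} = 1/({\lambda_{ST}}{\lambda_{TE}}\Delta_5^E) \to {\Delta_{13}}'$ and ${\Delta_{14}} = \mathcal{O}(1/{\lambda_{me}})$, so $e^{-{\Delta_{14}}} \to 1 - {C_E}\gamma_{thc}^E/({\lambda_{me}}{\lambda_{TE}}\Delta_5^E)$, and I would substitute the stated small-argument series for ${\rm{K}}_1$ (the $\ln$ term plus the $1/2 z^{-1/2}$ pole plus the first few series terms, keeping $l=1,2,3$ as the footnote dictates). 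Collecting the surviving $\mathcal{O}(1/{\lambda_{me}})$ contributions reproduces (31) line for line. The ideal case (28)$\to$(32) is easier: the first sum's ${\rm{K}}_0$ has argument $({\vartheta_k}+1)\sqrt{\gamma_{thc}^E/({\lambda_{ST}}{\lambda_{TE}}\gamma\Delta_5^E)}$, which is $\mathcal{O}(1/\sqrt{{\lambda_{me}}})$, so ${\rm{K}}_0 \to -\ln\!\big(\tfrac{{\vartheta_k}+1}{2}\sqrt{\gamma_{thc}^E/({\lambda_{me}}\lambda_{TE}^2\gamma\Delta_5^E)}\big)$, and in the integral term the ${\rm{K}}_0(2\sqrt{y/({\lambda_{ST}}{\lambda_{TE}})})$ similarly becomes a logarithm; the overall $1/({\lambda_{ST}}{\lambda_{TE}})$ factors turn into $1/({\lambda_{me}}\lambda_{TE}^2)$, yielding (32) directly.

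The main obstacle is bookkeeping rather than any conceptual difficulty: one must track exactly which power of ${\lambda_{me}}^{-1}$ each constant contributes, because several terms are products of two or three such constants and naive truncation can drop or double-count a leading-order piece. In particular, the ${\rm{K}}_1$ expansion contains a term that blows up like $({\Delta_{13}}'{\Delta_{15}})^{-1/2}$ while its prefactor vanishes like $\sqrt{{\Delta_{13}}'{\Delta_{15}}}$, so the product is $\mathcal{O}(1)$ and must be retained; similarly the logarithmic terms from the ${\rm{K}}_0$/${\rm{K}}_1$ expansions interact with the $\mathcal{O}(1/{\lambda_{me}})$ prefactors to produce the $\ln(\sqrt{\cdot})$ contributions that survive in (31)--(32). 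Care is also needed that the residual $1/\gamma$ inside ${\Delta_{15}}$ is left untouched, since the limit here is in ${\lambda_{me}}$, not in $\gamma$. Once the order-counting is organised, the remaining manipulations are the routine substitutions $e^x \approx 1+x$ and the tabulated small-argument Bessel expansions already cited in the paper.
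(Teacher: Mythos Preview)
Your proposal is correct and follows essentially the same route as the paper: start from the exact expressions (27) and (28), write ${\lambda_{ST}}={\lambda_{me}}{\lambda_{TE}}$, let ${\lambda_{me}}\to\infty$, and replace $e^{-x}\approx 1-x$ together with the small-argument expansions $K_0(x)\approx -\ln x$ and the series for $K_1$ stated in the corollary. The paper's own proof is a one-line sketch invoking exactly these two approximations; your version simply fills in the term-by-term bookkeeping (in particular the cancellation in $\sqrt{{\Delta_{13}}'{\Delta_{15}}}\,K_1(2\sqrt{{\Delta_{13}}'{\Delta_{15}}})$) that the paper leaves implicit.
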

\begin{proof}
The proof follows by taking ${\lambda _{me}}$ large in (31) and (32) and simplifying the expressions by utilizing ${e^{-x}} \approx 1 - x$ and ${K_0}\left( x \right) \approx  - {\rm{ln}}\left( x \right)$ if $x \to 0$.
\end{proof}

\begin{figure}[!t]
\setlength{\abovecaptionskip}{0pt}
\centering
\includegraphics [width=3.5in]{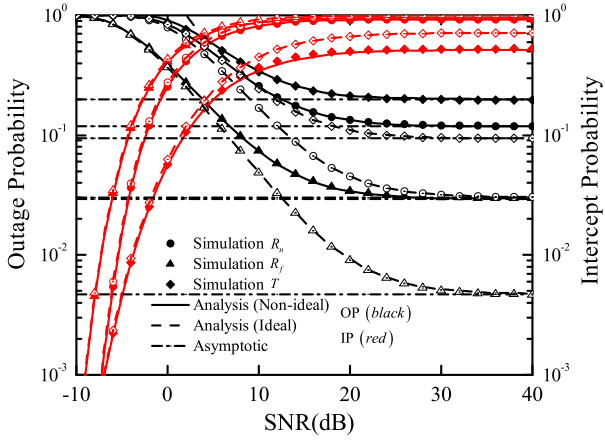}
\caption{{ OP and IP versus the transmit SNR. }}
\label{fig2}
\end{figure}

\begin{figure}[!t]
\setlength{\abovecaptionskip}{0pt}
\centering
\includegraphics [width=3.5in]{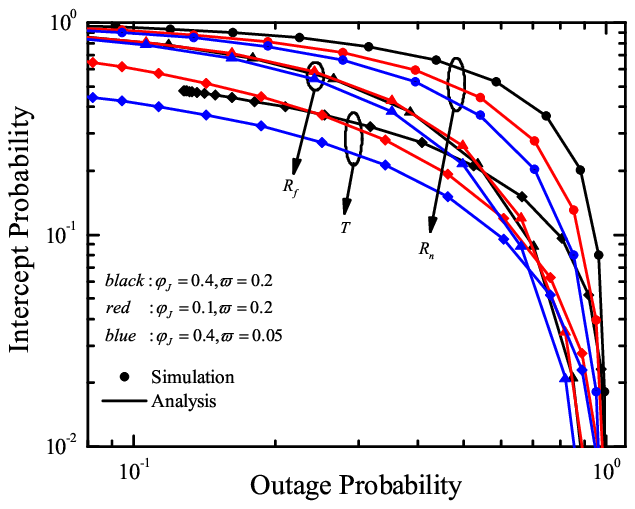}
\caption{{ IP versus OP for different power coefficient of artificial noise $\varphi _J$. }}
\label{fig3}
\end{figure}

\begin{remark}
From \textbf{Theorem 4} and \textbf{Corollary 5}-\textbf{Corollary 7}, the following observations can be inferred: 1) RHIs, CEEs and ipSIC can enhance the security of the ambient backscatter NOMA systems; 2) When the reflection coefficient $\beta $ increases, both $P_{int}^{{R_f}}$ and $P_{int}^{{R_n}}$ decrease, while $P_{int}^{T}$ increases; 3) Increasing $\varphi _J$ can reduce the IP, thereby improving the reliability-security trade-off of the considered systems; 4) as ${\lambda _{me}}$ grows, the security for $R_n$ and $R_f$ is improved, while the security for $T$ is reduced.
\end{remark}

\section{Numerical Results}
In this section, simulation results are provided to verify the correctness of our theoretical analysis in Section III. The results are verified by using Monte Carlo simulations with ${10^6}$ trials. Unless otherwise stated, we set the parameters as shown in Table I is at the top of the previous page.

\begin{figure}[t!]
\centering
\subfigure[OP versus RHIs and CEEs.]{
\begin{minipage}[t]{1\linewidth}
\centering
\includegraphics[width= 3.5in, height=2.652in]{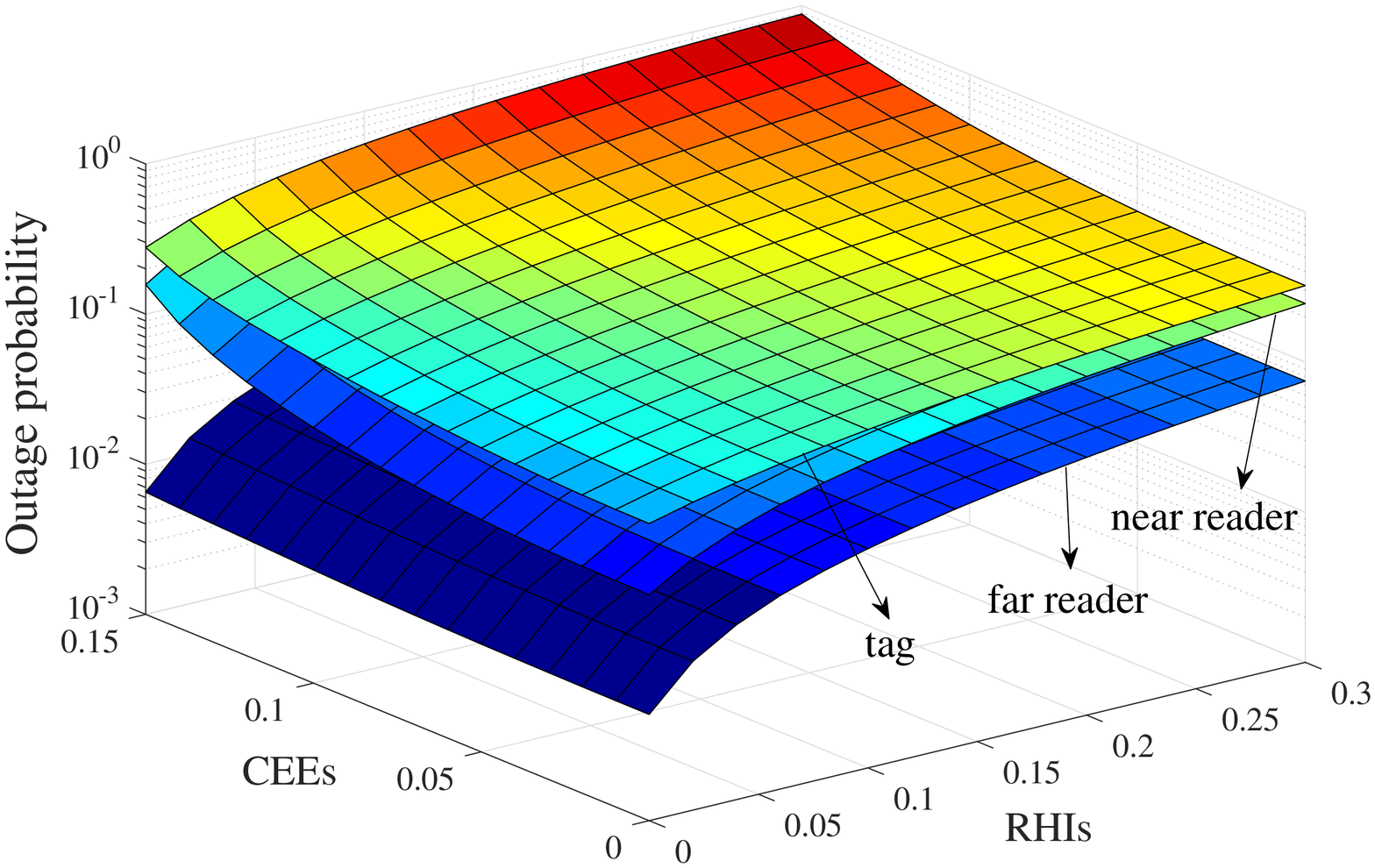}
\end{minipage}
}
\subfigure[IP versus RHIs and CEEs.]{
\begin{minipage}[t]{1\linewidth}
\centering
\includegraphics[width= 3.5in, height=2.652in]{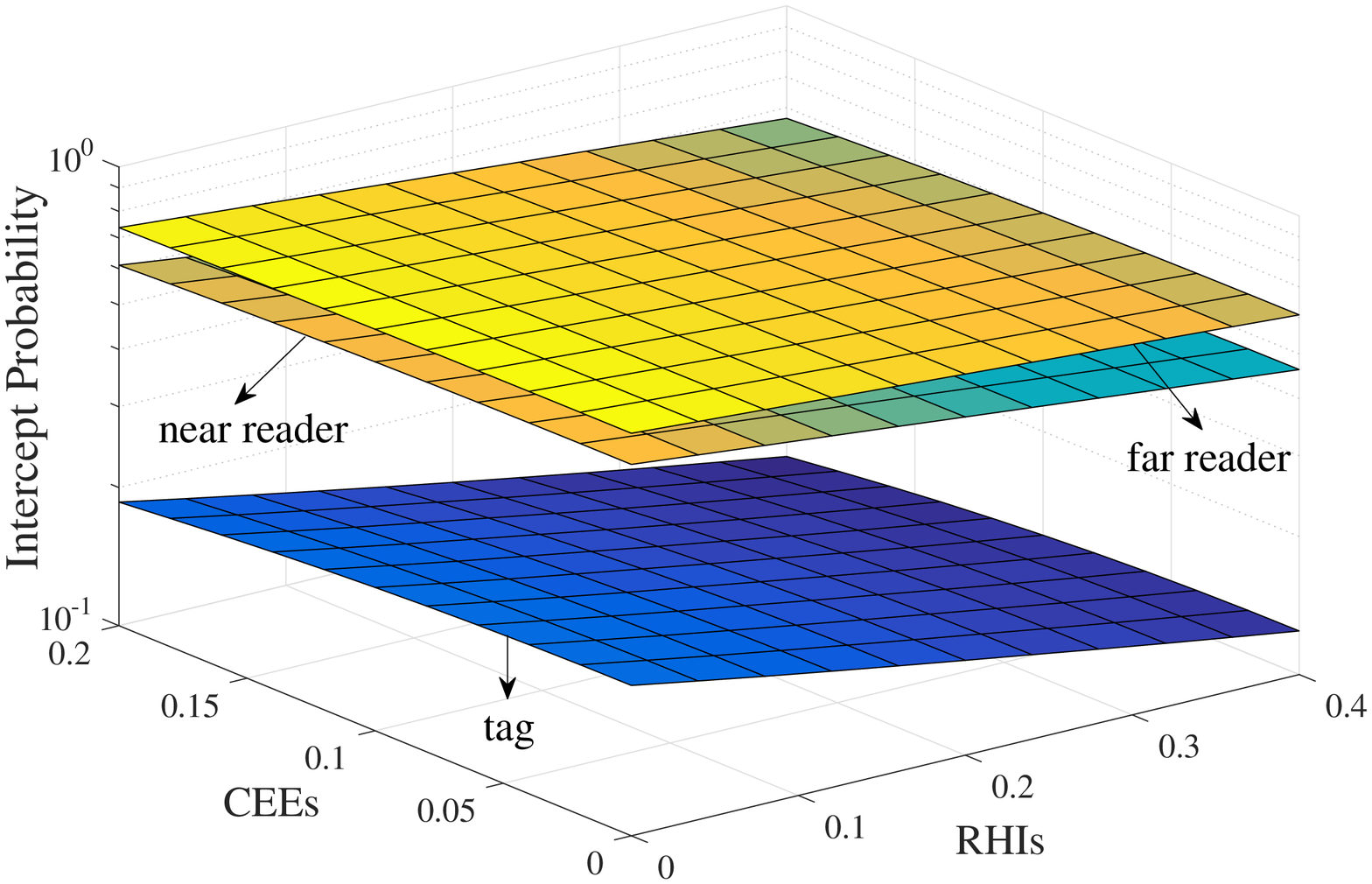}
\end{minipage}
}
\caption{{OP and IP versus RHIs and CEEs.}}
\label{fig4}
\end{figure}

\begin{figure}[t!]
\centering
\subfigure[OP versus the transmit SNR for different $\varepsilon $ and $\beta $.]{
\begin{minipage}[t]{1\linewidth}
\centering
\includegraphics[width= 3.5in]{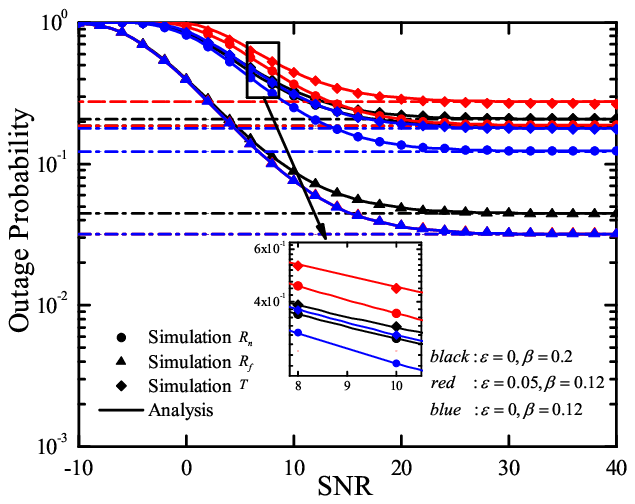}
\end{minipage}
}
\subfigure[IP versus the transmit SNR for different $\varepsilon $ and $\beta $.]{
\begin{minipage}[t]{1\linewidth}
\centering
\includegraphics[width= 3.5in]{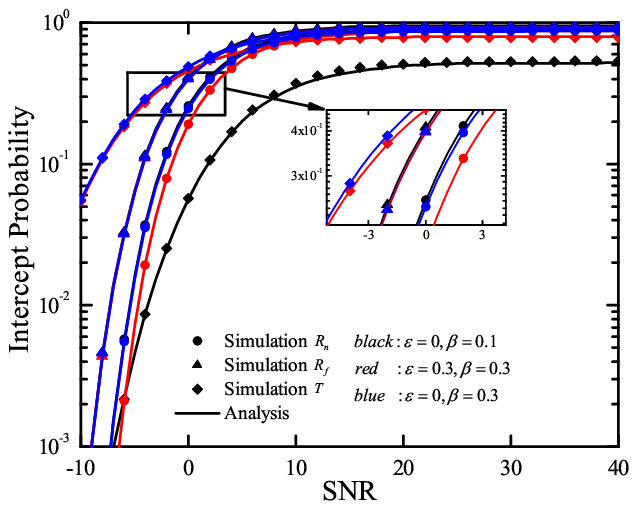}
\end{minipage}
}
\caption{{OP and IP versus the transmit SNR for different $\varepsilon $ and $\beta $.}}
\label{fig5}
\end{figure}
Fig. 2 plots the OP and the IP versus the transmit SNR for the far reader, the near reader and T, with $\kappa  = 0.1$ and $\sigma _e^2 = 0.05$. For comparison, the considered system performance of ideal conditions is provided with $\kappa  = 0$ and $\sigma _e^2 = 0$. It is shown that the theoretical results match well the simulations across the entire SNR region. We can also observe that the OP approaches a fixed constant due to the fixed estimation error and $\beta$ in the high SNR region, which results in zero diversity order. These results verify the conclusion in \textbf{Remark 1}. Moreover, RHIs have a positive impact on IP, which reveals that the ideal communication systems are more vulnerable to be eavesdropped than the non-ideal communication systems. Finally, we can also see that there exists a trade-off between reliability and security.

Fig. 3 demonstrates the impact of OP versus IP for different power coefficient of the artificial noise $\varphi _J$ and attenuation factor $\varpi $, with $\varphi _J = \left\{ {0.1,0.4} \right\}$ and $\varpi = \left\{ {0.2,0.05} \right\}$. In this simulation, we assume  $\kappa  = 0$ and $\sigma _e^2 = 0$. One can observe that as the power coefficient of the artificial noise $\varphi _J$ grows smaller, the reliability-security tradeoff of the considered system degrades significantly. This is because the interference signals at eavesdropper become more dominant, resulting in a higher IP. Similarly, the interference factor $\varpi $ of the readers increases so as to result in a higher OP, which indicates that the reliability-security tradeoff degrades obviously. It is noted that the IP of $T$ is the smallest, implying that $T$ has the most secure performance. Therefore, in order to improve reliability-security tradeoff of the system by artificial noise, the design with a greater power coefficient of the artificial noise and smaller interference factor of the reader is more important.

Fig. 4 presents the OPs andIPs versus RHIs $\kappa$ and CEEs $\sigma _e^2 $.  In this simulation, we set SNR $=25$ dB and $\varphi _J^{{R_n}}=0.05$ for the OP, while SNR $=5$ dB and $\varphi _J^{E}=0.2$ for IP. According to Fig.4 (a) and (b), it is clear that as $\kappa$ grows, $P_{out}^{{R_f}}$, $P_{out}^{{R_n}}$ and $P_{out}^{{T}}$ increase, while $P_{int}^{{R_f}}$, $P_{int}^{{R_n}}$ and $P_{int}^{{T}}$ decrease. Likewise, with an increasing $\sigma _e^2 $, the OPs of $R_f$, $R_n$ and $T$ increase, whereas those of the corresponding IPs decrease. It means that the reliability of $T$ is the worst, while it has better security. Moreover, for $R_f$, $R_n$ and $T$, the fluctuation for the OP and IP of RHIs is more obvious than that of CEEs, which shows that the reliability and security of the readers are more dependent on the ability of RHIs. Finally, we can also observe that as RHIs change, the OP of far reader changes drastically. In contrast, the change of OP of $T$ is the least obvious, most probably because $T$ eliminate part of interference caused by the far and near readers.

\begin{figure}[!t]
\setlength{\abovecaptionskip}{0pt}
\centering
\includegraphics [width=3.5in]{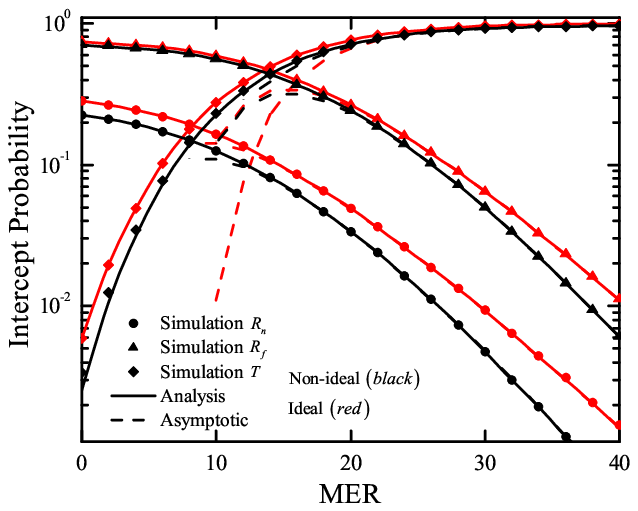}
\caption{{ IP versus MER for ideal and non-ideal conditions. }}
\label{fig6}
\end{figure}

Fig. 5 illustrates the OP and IP versus the transmit SNR for different $\varepsilon $ and $\beta $, respectively. In this simulation, we set: $\varepsilon  = \left\{ {0,0.05} \right\}$, $\beta  = \left\{ {0.2,0.12} \right\}$ for OP; $\varepsilon  = \left\{ {0,0.3} \right\}$, $\beta  = \left\{ {0.1,0.3} \right\}$ for IP. As can be seen in Fig. 5 (a), the error floors for the OP happen at high transmit SNR. The OP decreases as the transmit SNR increases, and depends on the value of $\varepsilon $ and $\beta $. More specifically, under perfect SIC $\left( {\varepsilon  = 0} \right)$, the outage behaviors of $R_f$, $R_n$ and $T$ improve remarkably when $\beta $ increases; similarly, for a fixed $\beta $, the increase of $\varepsilon $ also leads to lower reliability of $R_n$ and $T$.  By comparing Fig. 5 (a) with 5 (b), we can observe that $\varepsilon $ and $\beta $ have opposite effects on IP for the far reader, near reader, and $T$, while $\beta $ has identical effects on T, i.e., the increase of $\beta $ reduces the security of $T$. It is worth noting that OPs of $R_f$ and $R_n$ are more sensitive to $\beta $, which is due to the increase of interference from the backscatter link. For IP, $T$ is more sensitive to $\beta $. This happens because when $\beta $ increases, $E$ is more likely to eavesdrop the information of $c\left( t \right)$ successfully.

Fig. 6 presents the IP versus MER for $R_f$, $R_n$, and $T$ under ideal conditions with $\kappa  = 0$, $\sigma _e^2 = 0$, as well as non-ideal conditions with $\kappa  = 0.1$, $\sigma _e^2 = 0.05$.  In this simulation, we set SNR $=5$ dB, ${\lambda _{TE}}=2$, and $\left\{ {\gamma _{th1}^E,\gamma _{th2}^E,\gamma _{thc}^E} \right\} = \left\{ {0.3,0.3,1} \right\}$. From Fig. 6, we can observe that the asymptotic results are strict approximation of the IP in the high MER regime and the RHIs can enhance the security of $R_f$, $R_n$, and $T$. In addition, the IP of $R_f$ is much larger than that of $R_n$ when $R_f$ and $R_n$ have the same target rate, which is due to the fact that $R_f$ allocates more power. Therefore, considering the small power allocation coefficients $a_1$ and high target rate $\gamma _{th1}^E$ of the $R_n$, it is difficult for the information of $R_n$ to be eavesdropped by $E$. Finally, we can also observe that as MER grows, the security for $R_n$ and $R_f$ is improved, while the security for $T$ is reduced.

\section{Conclusion}
In this paper, we investigates the joint impacts of RHIs, CEEs and ipSIC on the reliability and the security of the ambient backscatter NOMA systems in terms of OP and IP. To improve the security performance, an artificial noise scheme was proposed, where the RF source simultaneously sends the signal and artificial noise to the readers and tag. The analytical expressions for the OP and the IP were derived. Furthermore, the asymptotic OP in the high SNR regime and the asymptotic IP in the high MER region are analyzed. The simulation results show that although RHIs, CEEs and ipSIC all have a significant negative impact for the OP of the far reader, near reader, and tag, they have a obvious positive effect for the IP on the three devices. In addition, the increase of $\beta$ will reduce the reliability and enhance the security for far reader and near reader. Finally, we can conclude that the optimal reliability-security tradeoff performance can be achieved by adjusting the power coefficient of the artificial noise and interference factor of the reader, which further drives ambient backscatter application in the IoT networks.

\numberwithin{equation}{section}
\section*{Appendix~A: Proof of Theorem 1} 
\renewcommand{\theequation}{A.\arabic{equation}}
\setcounter{equation}{0}
Substituting (3) into (6), the OP of $R_f$ can be expressed as
\begin{equation}
\label{6}
P_{out}^{{R_f}} = 1 - \underbrace {{{\rm{P}}_r}\left( {\gamma _{{R_f}}^{{x_2}} > \gamma _{th2}^{{R_f}}} \right)}_{{I_1}},
\end{equation}
where ${I_1}$ is calculated as follows:
\begin{align}\nonumber
\label{28_1}
&{I_1} = {{\rm{P}}_r}\left( {\gamma _{{R_f}}^{{x_2}} > \gamma _{th2}^{{R_f}}} \right)\\\nonumber
&= \int_{{\alpha _1}}^\infty  {\frac{1}{{{\lambda _{S{R_f}}}}}} {e^{ - \frac{x}{{{\lambda _{S{D_f}}}}}}}\frac{1}{{{\lambda _{T{R_f}}}}}{e^{ - \frac{y}{{{\lambda _{T{R_f}}}}}}}\frac{1}{{{\lambda _{ST}}}}{e^{ - \frac{z}{{{\lambda _{ST}}}}}}dxdydz\\\nonumber
&\mathop  = \limits^{u = z + \alpha } \Delta _2^{{R_f}}{e^{\Delta _1^{{R_f}} - \Delta _3^{{R_f}} - \frac{{\gamma _{th2}^{{R_f}}}}{{{\lambda _{S{R_f}}}\gamma \left( {{a_2} - {Q_{{R_f}}}\gamma _{th2}^{{R_f}}} \right)}}}}\int_\alpha ^\infty  {{e^{ - {\alpha _3}u}}\frac{1}{u}} du \\
& \mathop { = }\limits^{{l_1}} 1 \!+\! \Delta _2^{{R_f}}{e^{\Delta _1^{{R_f}}\! -\! \Delta _3^{{R_f}} \!-\! \frac{{\gamma _{th2}^{{R_f}}}}{{{\lambda _{S{R_f}}}\gamma \left( {{a_2} \!-\! {Q_{{R_f}}}\gamma _{th2}^{{R_f}}} \right)}}}}{\rm{Ei}}\left( { \!-\! \Delta _1^{{R_f}}} \right),
\end{align}
where ${\alpha _1} = \frac{{\left( {{B_{{R_f}}}z + {C_{{R_f}}}} \right)\gamma \gamma _{th2}^{{R_f}}y + {M_{{R_f}}}\gamma \gamma _{th2}^{{R_f}}z + \left( {{\psi _{{R_f}}} + 1} \right)\gamma \gamma _{th2}^{{R_f}}}}{{\left( {{a_2} - {Q_{{R_f}}}{\gamma _{th2}}} \right)\gamma }}$, ${\alpha _2} = \frac{{{\lambda _{S{R_f}}}\left( {{a_2} - {Q_{{R_f}}}\gamma _{th2}^{{R_f}}} \right) + {\lambda _{T{R_f}}}{C_{{R_f}}}\gamma _{th2}^{{R_f}}}}{{{\lambda _{T{R_f}}}{B_{{R_f}}}\gamma _{th2}^{{R_f}}}}$, ${\alpha _3} = \frac{{{M_{{R_f}}}\gamma _{th2}^{{R_f}}}}{{{\lambda _{S{R_f}}}\left( {{a_2} - {Q_{{R_f}}}{\gamma _{th2}}} \right)}} + \frac{1}{{{\lambda _{ST}}}}$, and the step ${l_1}$ is obtained by utilizing \cite[Eq. (3.352)]{32}. Finally, substituting (A.2) into (A.1), we can obtain (7).

Similarly, substituting (3) and (4) into (11), the (12) can be obtained.

\numberwithin{equation}{section}
\section*{Appendix~B: Proof of Theorem 3} 
\renewcommand{\theequation}{B.\arabic{equation}}
\setcounter{equation}{0}
Substituting (3), (4) and (5) into (14), the OP of $T$ can be expressed as
\begin{equation}
\label{6}
P_{out}^T \!=\! 1 \!-\! \underbrace {{{\rm{P}}_r}\left( {\gamma _{{R_n}}^{{x_2}} \!>\! \gamma _{th2}^{{R_n}},\gamma _{{R_n}}^{{x_1}} > \gamma _{th1}^{{R_n}},\gamma _{{R_n}}^{c\left( t \right)} > \gamma _{thc}^{{R_n}}} \right)}_{{I_2}},
\end{equation}

$\bullet$ Non-ideal conditions

For non-ideal conditions, ${I_2}$ is calculated as (A.2), shown at the top of the next page.
\begin{figure*}[!t]\label{18}
\normalsize
\begin{align}\nonumber
\label{15}
{I_2} \!=& {{\rm{P}}_r}\!\!\left(\!\!\! {{\varsigma _{{R_n}}}\!\gamma\!\! \left[\! {\left( \!{{B_{{\!R\!_n}}}\!{{\left|\! {{{\hat h}_{\!S\!T}}} \!\right|}^2} \!\!\!+ \!\! {C_{\!{R\!_n}}}}\!\! \right)\!{{\left|\! {{{\hat h}_{\!T{\!R\!_n}}}} \!\right|}^2} \!\!\!+\!\! {M_{{\!R\!_n}}}\!{{\left|\! {{{\hat h}_{\!ST}}} \!\right|}^2} \!\!\!+\!\! {\psi _{{\!R_n}}} \!\!+\!\! \frac{1}{\gamma }} \!\right]\! \!< \!{{\left| \!{{{\hat h}_{\!S{\!R_n}}}} \!\right|}^2}\! \!<\! \frac{{\!\left(\!\! {\Delta _5^{{\!R_n}}\!{{\left| \!{{{\hat h}_{\!ST}}} \!\right|}^2}\!\!\! -\! {C_{{\!R_n}}}\!\!\gamma _{\!thc}^{{\!R_n}}} \!\right)\!{{\left| \!{{{\hat h}_{\!T{\!R_n}}}} \!\right|}^2}\!\!\! -\!\! {M_{{\!R_n}}}\!{{\left|\! {{{\hat h}_{\!ST}}}\! \right|}^2}\!\!\gamma _{\!thc}^{{\!R_n}}\!\! - \!\!\left( \!\!{{N_{{\!R_n}}} \!\!\!+ \!\! \frac{1}{\gamma }} \!\!\right)\!\!\gamma _{\!thc}^{{\!R_n}}}}{{{\xi _{{R_n}}}\gamma _{thc}^{{R_n}}}}} \!\!\right)\\\nonumber
 =& \int_{\frac{{{C_{{R_n}}}\gamma _{thc}^{{R_n}}}}{{\Delta _5^{{R_n}}}}}^\infty  \!\!{\int_{\frac{{{M_{{R_n}}}\gamma _{thc}^{{R_n}}z \!+\! \left( {{\psi _{{\!R_n}}}\! + \! \frac{1}{\!\gamma }} \!\right)\!\gamma _{thc}^{{R_n}}}}{{\Delta _5^{{R_n}}z \!-\! {C_{{R_n}}}\gamma _{thc}^{{R_n}}}}}^\infty \!\! {\int_{{\varsigma _{{R_n}}}\left[ {\left( {{B_{{R_n}}}z \!+\! {C_{{R_n}}}} \right)y \!+\! {M_{{R_n}}}z \!+\! {\psi _{{R_n}}}\! +\! \frac{1}{\gamma }} \right]}^{\frac{{\left( {\Delta _5^{{R_n}}z \!-\! {C_{{R_n}}}\gamma _{thc}^{{R_n}}} \right)y \!- \!{M_{{R_n}}}{\gamma _{thc}}z \!-\! \left( {{\psi _{{R_n}}} \!+ \!1/\!\gamma } \!\right)\gamma _{thc}^{{R_n}}}}{{{\xi _{{R_n}}}\gamma _{thc}^{{R_n}}}}} \!\!{\frac{1}{{{\lambda _{S{R_n}}}}}{e^{ \!- \frac{x}{{{\lambda _{S{R_n}}}}}}}\!\frac{1}{{{\lambda _{\!T{R_n}}}}}{e^{ - \frac{y}{{{\lambda _{\!T{R_n}}}}}}}\frac{1}{{{\lambda _{\!ST}}}}{e^{ - \frac{z}{{{\lambda _{ST}}}}}}dxdydz} } } \\\nonumber
= &\underbrace {\int_{\frac{{{C_{{R_n}}}\gamma _{thc}^{{R_n}}}}{{\Delta _5^{{R_n}}}}}^\infty  {\int_{\frac{{{M_{{R_n}}}\gamma _{thc}^{{R_n}}z + \left( {{\psi _{{R_n}}} + \frac{1}{\gamma }} \right)\gamma _{thc}^{{R_n}}}}{{\Delta _5^{{R_n}}z - {C_{{R_n}}}\gamma _{thc}^{{R_n}}}}}^\infty  {{e^{ - \frac{1}{{{\lambda _{S{R_n}}}}}{\varsigma _{{R_n}}}\left[ {\left( {{B_{{R_n}}}z + {C_{{R_n}}}} \right)y + {M_{{R_n}}}z + {\psi _{{R_n}}} + \frac{1}{\gamma }} \right]}}\frac{1}{{{\lambda _{T{R_n}}}}}{e^{ - \frac{y}{{{\lambda _{T{R_n}}}}}}}\frac{1}{{{\lambda _{ST}}}}{e^{ - \frac{z}{{{\lambda _{ST}}}}}}dydz} } }_{{I_{21}}}\\
 & - \underbrace {\int_{\frac{{{C_{{R_n}}}\gamma _{thc}^{{R_n}}}}{{\Delta _5^{{R_n}}}}}^\infty  {\int_{\frac{{{M_{{R_n}}}\gamma _{thc}^{{R_n}}z + \left( {{\psi _{{R_n}}} + \frac{1}{\gamma }} \right)\gamma _{thc}^{{R_n}}}}{{\Delta _5^{{R_n}}z - {C_{{R_n}}}\gamma _{thc}^{{R_n}}}}}^\infty  {{e^{ - \frac{{\left( {\Delta _5^{{R_n}}z - {C_{{R_n}}}\gamma _{thc}^{{R_n}}} \right)y - {M_{{R_n}}}\gamma _{thc}^{{R_n}}z - \left( {{\psi _{{R_n}}} + 1/\gamma } \right)\gamma _{thc}^{{R_n}}}}{{{\lambda _{S{R_n}}}{\xi _{{D_n}}}\gamma _{thc}^{{R_n}}}}}}\frac{1}{{{\lambda _{T{R_n}}}}}{e^{ - \frac{y}{{{\lambda _{T{R_n}}}}}}}\frac{1}{{{\lambda _{ST}}}}{e^{ - \frac{z}{{{\lambda _{ST}}}}}}dydz} } }_{{I_{22}}}.
\end{align}
\hrulefill \vspace*{0pt}
\end{figure*}

By using some mathematical manipulations, we can obtain
\begin{align}\nonumber
\label{28_1}
&{I_{21}} = \int_{\frac{{{C_{{R_n}}}\gamma _{thc}^{{R_n}}}}{{\Delta _5^{{R_n}}}}}^\infty  {{\alpha _5}{e^{ - \frac{{{\varsigma _{{R_n}}}\left( {{M_{{R_n}}}z + {\psi _{{R_n}}} + \frac{1}{\gamma }} \right)}}{{{\lambda _{S{R_n}}}}} - {\alpha _4}}}\frac{1}{{{\lambda _{ST}}}}{e^{ - \frac{z}{{{\lambda _{ST}}}}}}dz}  \\\nonumber
&  = \int_0^\infty  {\frac{{{\lambda _{S{R_n}}}}}{{{\lambda _{T{R_n}}}{\lambda _{ST}}{\varsigma _{{R_n}}}{B_{{R_n}}}}}{e^{ - {\alpha _6}}}\frac{1}{{u + {B_4}}}{e^{ - \left( {{B_1}u + \frac{{{B_3} + {\Delta _6}}}{u}} \right)}}du}  \\\nonumber
& = \!\frac{{{\lambda _{S{R_n}}}}}{{{\lambda _{T{R_n}}}\!{\lambda _{ST}}\varsigma {T_{{R_n}}}}}{e^{ \!-\! {B_6}}}\!\sum\limits_{v = 1}^\infty  \!{{{\left(\! {\! -\! 1} \right)}^v}\!\frac{1}{{B_4^n}}} \!\!\int_0^\infty \!\! {{u^{v \!-\! 1}}{e^{\! -\! \left(\! {{B_1}u + \frac{{{B_3} + {\Delta _6}}}{u}} \!\right)}}du}  \\
&  \mathop  = \limits^{{l_2}} \!\!\!\frac{{2\!{\lambda _{S{R_n}}}}}{{{\lambda _{T{R_n}}}\!{\lambda _{ST}}\!{\varsigma _{{R_n}}}\!{B_{{R_n}}}}}\!{e^{ \!-\! {\alpha _6}}}\!\!\!\sum\limits_{v \!= \!1}^\infty  \!\!{{{\left( \!{ \!-\! 1} \right)}^v}\!\!\frac{1}{{B_4^v}}} \!\!{\left( \!\!{\frac{{\left( \!\!{{B_3} \!\!+\!\! {\Delta _6}}\! \right)}}{{{B_1}}}} \!\!\!\right)^{\frac{v}{2}}}\!\!\!{K_v}\!\!\left(\!\! {2\!\sqrt {\!\left(\! {{B_3}\!\! +\!\! {\Delta _6}} \!\!\right)\!{B_1}} } \!\!\right),
\end{align}
where $u = {\lambda _{S{R_n}}}{\lambda _{T{R_n}}}\Delta _5^{{R_n}}z - {\lambda _{S{R_n}}}{\lambda _{T{R_n}}}{C_{{R_n}}}\gamma _{thc}^{{R_n}}$, ${\alpha _4} = \frac{{\left[ {{\lambda _{T{R_n}}}{\varsigma _{{R_n}}}\left( {{B_{{R_n}}}z + {C_{{R_n}}}} \right) + {\lambda _{S{R_n}}}} \right]\left( {{\psi _{{R_n}}} + \frac{1}{\gamma }} \right)\gamma _{thc}^{{R_n}} + {M_{{R_n}}}\gamma _{thc}^{{R_n}}z}}{{{\lambda _{S{R_n}}}{\lambda _{T{R_n}}}\left( {\Delta _5^{{R_n}}z - {C_{{R_n}}}\gamma _{thc}^{{R_n}}} \right)}}$, ${\alpha _5} = \frac{{{\lambda _{S{R_n}}}}}{{{\lambda _{T{R_n}}}{\varsigma _{{R_n}}}\left( {{B_{{R_n}}}z + {C_{{R_n}}}} \right) + {\lambda _{S{R_n}}}}}$, ${\alpha _6} = {B_5} + \frac{{{\lambda _{T{R_n}}}{\varsigma _{{R_n}}}{B_{{R_n}}}\gamma _{thc}^{{R_n}}}}{{{\lambda _{S{R_n}}}{\lambda _{T{R_n}}}\gamma {\Delta _5}}} + \frac{{{\varsigma _{{R_n}}}}}{{{\lambda _{S{R_n}}}\gamma }}$, and ${l_2}$ is obtained by utilizing \cite[Eq. (3.471)]{12}.
\begin{align}\nonumber
\label{28_1}
&{I_{22}} \!\!\!=\!\!\!\! \int_{\frac{{{C_{\!{R_n}}}\!\!\gamma _{\!thc}^{\!{R_n}}}}{{\Delta _5^{{R_n}}}}}^\infty \!\!\!  {e^{\frac{{\!{\psi _{\!{R_n}}}\! +\! \frac{1}{\gamma }}}{{{\lambda _{\!S\!{R\!_n}}}\!\!{\xi _{\!{R\!_n}}}}} \!+\! \left( \!\!{\frac{{{M_{\!{R_n}}}}}{{{\lambda _{\!S{\!R\!_n}}}\!\!{\xi _{\!{R\!_n}}}}}\! -\! \frac{1}{{{\lambda _{\!S\!T}}}}} \!\!\right)\!z\!{\rm{ -\! }}{\alpha\! _7}}}\!\!\frac{{{\lambda _{S{R_n}}}{\xi _{{R_n}}}\gamma _{thc}^{{R_n}}}}{{{\lambda _{\!S\!T}}\!\!\left(\!\! {{\lambda _{\!T{\!R\!_n}}}\!\Delta _5^{{R_n}}\!z \!\!+\!\! \Delta _7^{{R_n}}}\!\! \right)}}\!dz     \\\nonumber
& \!\!=\! \frac{{{\lambda _{S{R_n}}}\!{\xi _{{R_n}}}\!\gamma _{thc}^{{R_n}}}}{{{\lambda _{ST}}{\lambda _{T{R_n}}}\Delta _5^{{R_n}}}}\!{e^{A_2^{{R_n}}}}\!\!\int_o^\infty \!\! {{e^{ \!-\! \left(\! {\left( \!{ \!-\! A_1^{{R_n}}} \!\right)u \!+\! \frac{{A_3^{{R_n}} \!+\! \Delta _8^{{R_n}}}}{u}} \!\!\right)}}\frac{1}{{u \!+\! A_4^{{R_n}}}}du}   \\ \nonumber
&= \frac{{{\lambda _{S{R_n}}}{\xi _{{R_n}}}\gamma _{thc}^{{R_n}}}}{{{\lambda _{ST}}{\lambda _{T{R_n}}}\Delta _5^{{R_n}}}}{e^{A_2^{{R_n}}}}\!\!\left[\! {\underbrace {\int_o^{A_4^{{R_n}}} \!{{e^{ \!- \left( \!{\left( { - {A_5}} \right)u + \frac{{{A_3}}}{u}} \right)}}\frac{1}{{u + A_4^{{R_n}}}}du} }_{{l_3}}} \right. \\
&\;\;\;\;\;\;+ \left. {\underbrace {\int_{A_4^{{R_n}}}^\infty  {{e^{ - \left( {\left( { - A_1^{{R_n}}} \right)u + \frac{{A_3^{{R_n}} + \Delta _8^{{R_n}}}}{u}} \right)}}\frac{1}{{u + A_4^{{R_n}}}}du} }_{{l_4}}} \right],
\end{align}
where $u = {\lambda _{S{R_n}}}{\xi _{{R_n}}}{\lambda _{T{R_n}}}\Delta _5^{{R_n}}z - {\lambda _{S{R_n}}}{\xi _{{R_n}}}{\lambda _{T{R_n}}}{C_{{R_n}}}\gamma _{thc}^{{R_n}}$, ${\alpha _7} = \frac{{\left( {{\lambda _{T{R_n}}}\Delta _5^{{R_n}}z + \Delta _7^{{R_n}}} \right)\left( {{\psi _{{R_n}}} + 1/\gamma  + {M_{{R_n}}}z} \right)}}{{{\lambda _{S{R_n}}}{\xi _{{R_n}}}{\lambda _{T{R_n}}}\left[ {\Delta _5^{{R_n}}z - {C_{{R_n}}}\gamma _{thc}^{{R_n}}} \right]}}$, and $l_3$ can be approximated by the Gaussian-Chebyshev quadrature \cite{33}, i.e., ${l_3} \approx  \frac{\pi }{N}\sum_{k = 0}^N {\frac{1}{{\left( {{\vartheta _k} + 3} \right)}}{e^{ - \left( {\frac{{2\left( {A_3^{{R_n}} + \Delta _8^{{R_n}}} \right)}}{{A_4^{{R_n}}\left( {{\vartheta _k} + 1} \right)}} - \frac{{A_1^{{R_n}}A_4^{{R_n}}\left( {{\vartheta _k} + 1} \right)}}{2}} \right)}}\sqrt {1 - \vartheta _k^2} }  $. Next, due to ${A_4^{{R_n}} \le 1}$, $l_4$ can be expressed as
\begin{align}\nonumber
\label{28_1}
{l_4} \approx & \int_{A_4^{{R_n}}}^\infty  {{e^{ - \left( {\left( { - A_1^{{R_n}}} \right)u + \frac{{A_3^{{R_n}} + \Delta _8^{{R_n}}}}{u}} \right)}}\frac{1}{u}du}      \\\nonumber
= &\int_0^\infty  {{e^{ - \left( {\left( { - A_1^{{R_n}}} \right)u + \frac{{A_3^{{R_n}} + \Delta _8^{{R_n}}}}{u}} \right)}}\frac{1}{u}du} \\\nonumber
 &- \int_0^{A_4^{{R_n}}} {{e^{ - \left( {\left( { - A_1^{{R_n}}} \right)u + \frac{{A_3^{{R_n}} + \Delta _8^{{R_n}}}}{u}} \right)}}\frac{1}{u}du}   \\ \nonumber
 =& 2{K_0}\left( {2\sqrt { - A_1^{{R_n}}\left( {A_3^{{R_n}} + \Delta _8^{{R_n}}} \right)} } \right) -\\
  \frac{\pi }{N}&\!\!\sum\limits_{k = 0}^N  \!\frac{1}{{{\vartheta _k} \!+\! 1}}{e^{ \!- \left(\! {\frac{{2\left(\! {A_3^{{R_n}} \!+\! \Delta _8^{{R_n}}} \!\right)}}{{A_4^{{R_n}}\left(\! {{\vartheta _k} \!+\! 1} \!\right)}} \!-\! \frac{{A_1^{{R_n}}A_4^{{R_n}}\left(\! {{\vartheta _k} \!+\! 1} \!\right)}}{2}} \!\right)}}\sqrt {1 \!-\! \vartheta _k^2} .
\end{align}
By substituting $l_3$ and (B.5) into (B.4), $I_{22}$ can be obtained; substituting (B.3) and (B.4) into (B.2), $I_2$ can be derived.

$\bullet$ Ideal conditions

Substituting $\kappa  = 0$ and $\sigma _e^2 = 0$ into (3), (4) and (5), ${C_{{R_f}}}={M_{{R_f}}}={C_{{R_n}}}={M_{{R_n}}}=0$. Then, the OP of $T$ under ideal conditions is given at the top of next page.
\begin{figure*}[!t]\label{18}
\normalsize
\begin{align}\nonumber
\label{28_1}
&P_{out}^{T,id} =1-{\int_{\frac{{\left( {{\psi _{{R_n}}} + 1/\gamma } \right)\gamma _{thc}^{{R_n}}}}{{\Delta _5^{{R_n}}}}}^\infty  {\left( {{e^{ - \frac{{{\varsigma _{{R_n}}}\left( {{B_{{R_n}}}y + {\psi _{{R_n}}}} \right)}}{{{\lambda _{S{R_n}}}}}}} - {e^{ - \frac{{\Delta _5^{{R_n}}y - \left( {{\psi _{{R_n}}} + 1/\gamma } \right)\gamma _{thc}^{{R_n}}}}{{{\lambda _{S{R_n}}}{\xi _{{R_n}}}\gamma _{thc}^{{R_n}}}}}}} \right)} \frac{2}{{{\lambda _{ST}}{\lambda _{T{R_n}}}}}{K_0}\left( {2\sqrt {\frac{y}{{{\lambda _{ST}}{\lambda _{T{R_n}}}}}} } \right)dy}   \\ \nonumber
&= 1-\underbrace {\int_0^\infty  {\left( {{e^{ - \frac{{{\varsigma _{{R_n}}}\left( {{B_{{R_n}}}y + {\psi _{{R_n}}}} \right)}}{{{\lambda _{S{R_n}}}}}}} - {e^{ - \frac{{\Delta _5^{{R_n}}y - \left( {{\psi _{{R_n}}} + 1/\gamma } \right)\gamma _{thc}^{{R_n}}}}{{{\lambda _{S{R_n}}}{\xi _{{R_n}}}\gamma _{thc}^{{R_n}}}}}}} \right)} \frac{2}{{{\lambda _{ST}}{\lambda _{T{R_n}}}}}{K_0}\left( {2\sqrt {\frac{y}{{{\lambda _{ST}}{\lambda _{T{R_n}}}}}} } \right)dy}_{{I_{31}}}\\
& + \underbrace {\int_0^{\frac{{\left( {{\psi _{{R_n}}} + 1/\gamma } \right)\gamma _{thc}^{{R_n}}}}{{\Delta _5^{{R_n}}}}} {\left( {{e^{ - \frac{{{\varsigma _{{R_n}}}\left( {{B_{{R_n}}}y + {\psi _{{R_n}}}} \right)}}{{{\lambda _{S{R_n}}}}}}} - {e^{ - \frac{{\Delta _5^{{R_n}}y - \left( {{\psi _{{R_n}}} + 1/\gamma } \right)\gamma _{thc}^{{R_n}}}}{{{\lambda _{S{R_n}}}{\xi _{{R_n}}}\gamma _{thc}^{{R_n}}}}}}} \right)} \frac{2}{{{\lambda _{ST}}{\lambda _{T{R_n}}}}}{K_0}\left( {2\sqrt {\frac{y}{{{\lambda _{ST}}{\lambda _{T{R_n}}}}}} } \right)dy}_{{I_{32}}}.
\end{align}
\hrulefill \vspace*{0pt}
\end{figure*}

\noindent{In (B.6), ${{I_{31}}}$ can be obtained by utilizing \cite[Eq. (6.611)]{32}, ${{I_{32}}}$ can be approximated by the Gaussian-Chebyshev quadrature \cite{33}. Thus, ${{I_{31}}}$ and ${{I_{32}}}$ can be expressed as}
\begin{align}
\label{28_1}
{I_{31}} \!\!\!=\!\! {\Delta _{11}}\!{e^{{\Delta _{11}}\! +\! \frac{1}{{{\lambda _{S{R_n}}}\!\gamma \!{\xi _{{R_n}}}}}}}\!{\rm{Ei}}\!\left( \!{ - \!{\Delta _{11}}} \!\right)\! \!\!-\!\! {\Delta _9}\!{e^{{\Delta _9} \!-\! \frac{{{\varsigma _{{R_n}}}}}{{{\lambda _{S{R_n}}}\!\gamma }}}}\!{\rm{Ei}}\!\left( \!\!{ -\! {\Delta _9}} \!\right),
\end{align}
\begin{align}\nonumber
\label{28_1}
{I_{32}} &= \frac{{\gamma _{thc}^{{R_n}}\pi }}{{N{\lambda _{T{R_n}}}{\lambda _{ST}}\gamma \Delta _5^{{R_n}}}}\sum\limits_{k = 0}^N {{K_0}\left( {2\sqrt {{\Delta _{10}}} } \right)} \sqrt {1 - \vartheta _k^2}\times \\
 & \left[\! {{e^{ \!-\! \left( {{\varsigma _{{R_n}}}{B_{{R_n}}}{\Delta _{10}}{\rm{ \!+\! }}\frac{{{\varsigma _{{R_n}}}}}{{{\lambda _{S{R_n}}}\gamma }}} \!\right)}}\! -\! {e^{\frac{1}{{{\lambda _{S{R_n}}}\gamma {\xi _{{R_n}}}}}\! -\! \frac{{{\vartheta _k} + 1}}{{2{\lambda _{S{R_n}}}\gamma {\xi _{{R_n}}}}}}}} \!\!\right] .
\end{align}

Similarly, substituting (B.7) and (B.8) into (B.6), we can obtain $P_{out}^{T,id}$.

\numberwithin{equation}{section}
\section*{Appendix~C: Proof of Theorem 4} 
\renewcommand{\theequation}{C.\arabic{equation}}
\setcounter{equation}{0}
According to $I_1$, we can obtain $P_{int}^{R_f}$ and $P_{int}^{R_n}$. Substituting (5) into (24), the IP of $T$ can be expressed as

$\bullet$ Non-ideal conditions

\begin{align}\nonumber
\label{15}
&P_{int}^{T,ni} \!\!\!=\!\!\! \int_{\frac{{{C_E}\!\gamma _{thc}^E}}{{\Delta _5^E}}}^\infty \!\!\! {\int_{\frac{{{M_E}\!\gamma _{thc}^Ez \!+ \!\left(\! {{\psi _E} \!+\! \frac{1}{\gamma }} \!\right)\!\gamma _{thc}^E}}{{\Delta _5^Ez\! -\! {C_E}\gamma _{thc}^E}}}^\infty \!\! {\frac{1}{{{\lambda _{TE}}\!{\lambda _{ST}}}}\!{e^{\! - \! \left(\! {\frac{y}{{{\lambda _{TE}}}} \!+\! \frac{z}{{{\lambda _{ST}}}}} \!\right)\!}}\!dydz} }  -  \\\nonumber
&\int_{\frac{{{C_E}\!\gamma _{thc}^E}}{{\Delta _5^E}}}^\infty \!\!\! {\int_{\frac{{{M_E}\!\gamma _{thc}^Ez \!+\! \left(\! {{\psi _E} \!+\! \frac{1}{\gamma }} \!\right)\!\gamma _{thc}^E}}{{\Delta _5^Ez - {C_E}\gamma _{thc}^E}}}^\infty \!\! {{e^{\! -\! \frac{{\left(\! {\Delta _5^Ez - {C_E}\!\gamma _{thc}^E} \!\right)y \!-\! {M_E}\!\gamma _{thc}^Ez \!-\! \left(\! {{\psi _E} \!+\! \frac{1}{\gamma }} \!\right)\!\gamma _{thc}^E}}{{{\lambda _{SE}}{\xi _E}\gamma _{thc}^E}}}}} }  \\
& \times \frac{1}{{{\lambda _{TE}}}}{e^{ - \frac{y}{{{\lambda _{TE}}}}}}\frac{1}{{{\lambda _{ST}}}}{e^{ - \frac{z}{{{\lambda _{ST}}}}}}dydz.
\end{align}

Similar to the derivation process of $I_{22}$, after some mathematical manipulations, $P_{int}^{T,ni}$ can be obtained.

$\bullet$ Ideal conditions

Substituting $\kappa  = 0$ and $\sigma _e^2 = 0$ into (5), ${C_{{E}}}={M_{{E}}}=0$. Then, the IP of $T$ at the ideal conditions is given by
\begin{align}\nonumber
\label{15}
P_{int}^{T,id} =& \int_0^\infty   \int_{\frac{{\gamma _{thc}^E}}{{\gamma \Delta _5^E}}}^\infty   \left( {1 - {e^{ - \frac{{\Delta _5^E\gamma y - \gamma _{thc}^E}}{}{\lambda _{SE}}{\xi _E}\gamma \gamma _{thc}^E}}} \right)\\
&\frac{2}{{{\lambda _{TE}}{\lambda _{ST}}}}{K_0}\left( {2\sqrt {\frac{y}{{{\lambda _{TE}}{\lambda _{ST}}}}} } \right)dy,
\end{align}

After some mathematical manipulations, we can obtain $P_{int}^{T,id}$.
\bibliographystyle{IEEEtran}
\bibliography{ZML_tvt}

\end{document}